\definecolor{color1}{RGB}{228,26,28}
\definecolor{color2}{RGB}{55,126,184}
\definecolor{color3}{RGB}{77,175,74}
\definecolor{color4}{RGB}{152,78,163}
\definecolor{color5}{RGB}{255,127,0}
\definecolor{color6}{RGB}{255,255,51}
\definecolor{color7}{RGB}{166,86,40}
\definecolor{color8}{RGB}{247,129,191}
\tikzstyle{tnode} = [circle,inner sep=0pt,minimum size=12pt]
\tikzstyle{leaf} = [draw,circle,inner sep=0pt,minimum size=12pt,line width=0.5pt]
\tikzstyle{inner} = [draw,circle,inner sep=0pt,minimum size=5pt]
\tikzstyle{tclade} = [draw,dashed,regular polygon,regular polygon sides=3,inner sep=0pt,minimum size=36pt,yshift=-0.2cm]
\tikzstyle{lowlight} = [fill=blue!20] 
\tikzstyle{highlight} = [fill=red!20,tnode]
\tikzstyle{lowlightedge} = [preaction={color=blue!20,line width=3pt}]
\tikzstyle{highlightedge} = [preaction={draw,color=red!20,line width=3pt}]
\tikzstyle{highlightedgecolor} = [preaction={color=#1,draw,line width=3pt}]
\newcommand{\tikzmark}[1]{\tikz[overlay,remember picture] \node (#1) {};}
\newcommand{\DrawBox}[3][]{%
    \tikz[overlay,remember picture]{
    \fill[fill opacity=0.2, #1]
      ($({#2})+(-0.2em,0.9em)$) rectangle
      ($({#3})+(0.2em,-0.3em)$);}
}
\newtheorem{thm}{Theorem}
\newtheorem{prop}[thm]{Proposition}
\newtheorem{conj}[thm]{Conjecture}
\theoremstyle{definition}
\newtheorem{dfn}[thm]{Definition}
\newcommand{\ZZ}{\mathbb Z}
\newcommand{\EE}{\mathbb E}
\DeclareMathOperator{\OLA}{OLA}
\newcommand{\RBT}{\mathcal{RBT}} 
\newcommand{\RBThat}{\widehat{\mathcal{RBT}}}
\newcommand{\codes}{\mathcal{C}}
\DeclareMathOperator{\labelfn}{{\tt label}}
\DeclareMathOperator{\theight}{ht}
\newcommand{\CF}[1]{{#1}.{\tt CF}}
\newcommand{\CS}[1]{{#1}.{\tt CS}}
\title{Vector encoding of phylogenetic trees by ordered leaf attachment}
\author[1,2]{Harry Richman}
\author[3]{Cheng Zhang}
\author[1,4,5,6]{Frederick A. Matsen IV}
\affil[1]{{\small Computational Biology Program, Fred Hutchinson Cancer Center, Seattle, USA}}
\affil[2]{{\small Mathematics Division, National Center for Theoretical Sciences, Taipei, Taiwan}}
\affil[3]{{\small School of Mathematical Sciences and Center for Statistical Science, Peking University, Beijing, China}}
\affil[4]{{\small Department of Statistics, University of Washington, Seattle, USA}}
\affil[5]{{\small Department of Genome Sciences, University of Washington, Seattle, USA}}
\affil[6]{{\small Howard Hughes Medical Institute, Fred Hutchinson Cancer Center, Seattle, USA}}
\begin{document}

\maketitle

\begin{abstract}
  As part of work to connect phylogenetics with machine learning, there has been considerable recent interest in vector encodings of phylogenetic trees.
  We present a simple new ``ordered leaf attachment'' (OLA) method for uniquely encoding a binary, rooted phylogenetic tree topology as an integer vector.
  OLA encoding and decoding take linear time in the number of leaf nodes, and the set of vectors corresponding to trees is a simply-described subset of integer sequences.
  The OLA encoding is unique compared to other existing encodings in having these properties.
  The integer vector encoding induces a distance on the set of trees, and we investigate this distance in relation to the NNI and SPR distances.
\end{abstract}


\section{Introduction}

Phylogenetics is the study of evolutionary histories, and how to infer such histories from observed data.
An evolutionary history is commonly formalized as the data of a leaf-labeled binary rooted tree (a ``tree topology'').
Inference and analysis of phylogenetic trees is typically done using a tree as a discrete structure.

However, recent work in phylogenetics has begun to consider the use of vector representations of trees.
This is motivated by recent advances in machine learning, where vector representations are the standard output of many classes of models.
It is also motivated by the desire to represent collections of trees.

The first of the more recent contributions to this literature includes work by Penn et al.~\cite{penn-etal} introducing the Phylo2Vec encoding, which is an integer vector representation of a tree topology.
This system encodes a phylogenetic tree with $n$ leaves as a list of $n - 1$ integers $(a_1, \ldots, a_{n-1})$, where the $i$-th entry satisfies $0 \leq a_i \leq 2 i - 2$.
However, this encoding requires quadratic time to compute.
The HOP metric paper~\cite{chauve-etal} introduced a linear-time encoding, but both the encoding itself and the resulting set of vectors in this encoding are rather complicated.
Both of these papers investigated the connection between encoding distances and traditional tree metrics.

In this paper, we introduce a simple new method for encoding a tree topology as a list of integers that can be computed in linear time.
We call this the ``ordered leaf attachment'' (OLA) code.
The OLA code is a bijection between the set of rooted tree topologies on $n$ leaves and a convex set of length $n-1$ vectors defined by a simple set of coordinate-wise inequalities.

The OLA encoding induces a distance on rooted tip-ordered trees.
We show that this OLA distance is bounded ``on average'' by a constant factor of the NNI distance, in a sense that we make precise.
We conjecture, and provide empirical evidence for, an asymptotic upper bound for OLA distance by SPR distance in terms of (average) tree height.
A construction based on the OLA code has already proven itself useful in work defining a variational autoencoder on phylogenetic trees~\cite{xie2025phylovae}.

\section{Results}

\subsection{Informal description of OLA encoding}

Here we give an informal description of the OLA code, which will be sufficient to give intuition and present results. 
A detailed description and further discussion is given in Section~\ref{sec:methods}.

The basic idea behind the OLA encoding is that a single tree, with a total ordering on its leaves, can be thought of as a sequence of trees obtained by adding the leaves one-by-one.
In the OLA encoding, the $i$-th entry of the vector records the location where the $i$-th leaf was added to the tree.

We first consider a special case: say a phylogenetic tree is \emph{Yule-type} if, in the process of building up the tree, each new leaf is added as a sister to a previously-added leaf.
(This terminology comes from the Yule random tree model~\cite[Chapter 2.5]{semple-steel}.)
For a Yule-type tree, the $i$-th entry of the OLA encoding is just the label of the sister leaf at the step when the $i$-th leaf is added.
For example, the tree shown in Figure~\ref{fig:yule-ola-code} has OLA code \texttt{(0,0,2,1)}.

\begin{figure}[h]
  \centering
  \raisebox{-0.5\height}{\begin{tikzpicture}[scale=0.5,
    level 1/.style={level distance=1cm},
  ]
    \node {$\rho$} [grow=down]
      child {node[lowlight,inner] (P) {}
        child {node[leaf,highlight] {0}}
        child {node[leaf,lowlight] (Q) {1}}
      };
    \draw[lowlightedge] (P) -- (Q);
    \draw (P) -- (Q);
  \end{tikzpicture}}
  $\quad\rightsquigarrow\quad$  
  \raisebox{-0.5\height}{\begin{tikzpicture}[scale=0.5,
    level 1/.style={level distance=1cm},
  ]
  \node {$\rho$} [grow=down]
      child {node[inner,label=above right:{}] {}
        child {node[inner,label={},lowlight] {}
          child {node[leaf,highlight] {0}}
          child {node[leaf,lowlight] {2}}
        }
        child {node[leaf] {1}}
      };
  \end{tikzpicture}}
  \quad$\rightsquigarrow$\quad
  \raisebox{-0.5\height}{\begin{tikzpicture}[scale=0.5,
    level 1/.style={level distance=1cm},
  ]
  \node {$\rho$} [grow=down]
      child {node[inner,label=above right:{}] {}
        child {node[inner,label={},left=0.2cm] {}
          child {node[leaf] {0}}
          child {node[inner,label={},lowlight] (P) {}
            child {node[leaf,highlight] {2}}
            child {node[leaf,lowlight] (Q) {3}}
          }
        }
        child {node[leaf,right=0.2cm] {1}}
      };
    \draw[lowlightedge] (P) -- (Q);
    \draw (P) -- (Q);
  \end{tikzpicture}}
  \quad$\rightsquigarrow$\quad
  \raisebox{-0.5\height}{\begin{tikzpicture}[scale=0.5,
    level 1/.style={level distance=1cm},
  ]
  \node {$\rho$} [grow=down]
      child {node[inner,label=above right:{}] {}
        child {node[inner,label={},left=0.5cm] {}
          child {node[leaf] {0}}
          child {node[inner,label={}] {}
            child {node[leaf] {2}}
            child {node[leaf] {3}}
          }
        }
        child {node[inner,label={},lowlight,right=0.5cm] (P) {}
          child {node[leaf,highlight] {1}}
          child {node[leaf,lowlight] (Q) {4}}
        }
      };
    \draw[lowlightedge] (P) -- (Q);
    \draw (P) -- (Q);
  \end{tikzpicture}}
\caption{%
  OLA-encoding a Yule-type tree: \texttt{(0,0,2,1)}.
  Yule-type trees are those where each new leaf is added as a sister to a previously-added leaf, and for these trees the OLA encoding simply records the label of the sister leaf.
  At each step, the new leaf is highlighted in blue, and the sister leaf is highlighted in red.
}
\label{fig:yule-ola-code}
\end{figure}
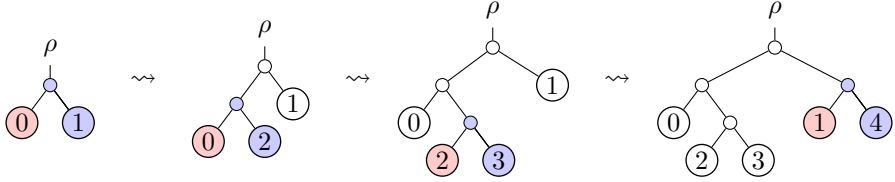

In the general case, when a new leaf is added, it may be added as the sister of an internal node.
Here a complication arises, since the internal nodes do not come equipped with labels.
However, using this idea to ``record each sister-node label,'' the problem of encoding a phylogenetic tree can be reduced to the problem of defining a consistent method for labeling internal nodes of a phylogenetic tree.

Our proposed OLA encoding provides one such labeling of internal nodes.
For a tree with $n$ leaves, the internal nodes are labeled with negative integers $-1, -2, \ldots, -n + 1$.
Each time a new leaf is added to a tree, a new internal node is created.
When we add a leaf labeled by $i$, we then label the new internal node $-i$.
This gives a well-defined labeling of internal nodes, which we consider ``canonical.''
An example of these internal node labels is shown in Figure~\ref{fig:non-yule-ola-code}.
For a larger example, see Appendix~\ref{sec:ola-ex-large}.
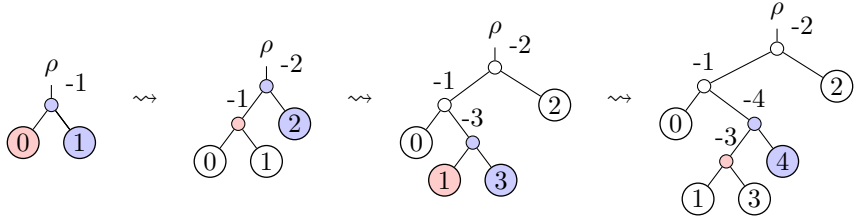
\begin{figure}[h]
  \centering
  \raisebox{-0.5\height}{\begin{tikzpicture}[scale=0.5,
    level 1/.style={level distance=1cm},
  ]
    \node {$\rho$} [grow=down]
      child {node[lowlight,inner,label=above right:{-1}] (P) {}
        child {node[leaf,highlight] {0}}
        child {node[leaf,lowlight] (Q) {1}}
      };
    \draw[lowlightedge] (P) -- (Q);
    \draw (P) -- (Q);
  \end{tikzpicture}}
  $\quad\rightsquigarrow\quad$  
  \raisebox{-0.5\height}{\begin{tikzpicture}[scale=0.5,
    level 1/.style={level distance=1cm},
  ]
  \node {$\rho$} [grow=down]
      child {node[inner,lowlight,label=above right:{-2}] {}
        child {node[highlight,inner,label={-1}] {}
          child {node[leaf] {0}}
          child {node[leaf] {1}}
        }
        child {node[leaf,lowlight] {2}}
      };
  \end{tikzpicture}}
  \quad$\rightsquigarrow$\quad
  \raisebox{-0.5\height}{\begin{tikzpicture}[scale=0.5,
    level 1/.style={level distance=1cm},
  ]
  \node {$\rho$} [grow=down]
      child {node[inner,label=above right:{-2}] {}
        child {node[inner,label={-1},left=0.2cm] {}
          child {node[leaf] {0}}
          child {node[inner,label={-3},lowlight] (P) {}
            child {node[leaf,highlight] {1}}
            child {node[leaf,lowlight] (Q) {3}}
          }
        }
        child {node[leaf,right=0.2cm] {2}}
      };
  \end{tikzpicture}}
  \quad$\rightsquigarrow$\quad
  \raisebox{-0.5\height}{\begin{tikzpicture}[scale=0.5,
    level 1/.style={level distance=1cm},
  ]
  \node {$\rho$} [grow=down]
      child {node[inner,label=above right:{-2}] {}
        child {node[inner,label={-1},left=0.5cm] {}
          child {node[leaf] {0}}
          child {node[lowlight,inner,label={-4},right=0.2cm] {}
            child {node[highlight,inner,label={-3}] {}
              child {node[leaf] {1}}
              child {node[leaf] {3}}
            }
            child {node[lowlight,leaf] {4}}
          }
        }
        child {node[leaf,right=0.2cm] {2}}
      };
  \end{tikzpicture}}
\caption{%
OLA-encoding a non-Yule-type tree: \texttt{(0,-1,1,-3)}.
For non-Yule-type trees, we introduce a canonical labeling of internal nodes, where each internal node is labeled by $-i$ when the $i$-th leaf is added.
Using this convention, we construct the OLA encoding by adding the label of the sister node at each step, whether or not that sister is a leaf.
At each step, the new leaf is highlighted blue, and the sister node is highlighted red.
}
\label{fig:non-yule-ola-code}
\end{figure}

\subsection{Statement of bijection}
Now we can state the main property of the ordered leaf attachment (OLA) code for a leaf-ordered binary, rooted tree.
Let $\RBThat_n$ denote the set of rooted, binary trees with $n$ leaves equipped with a linear order on the leaf set.
The OLA encoding gives a bijection between $\RBThat_n$ and the set of valid integer vectors
\begin{equation}
  \codes_{n - 1} = \{ (a_1, a_2, \ldots, a_{n - 1}) \in \ZZ^{n - 1} : -i < a_i < i \}.
\end{equation}

\begin{restatable}{thm}{OLAEncodingDecoding}
  For any $n \geq 2$, the OLA encoding and decoding algorithms (Algorithms~\ref{alg:ola-encode} and \ref{alg:ola-decode}) define a pair of inverse bijections
  \begin{equation}
    \Phi : \RBThat_n \to \codes_{n - 1} \qquad\text{and}\qquad \Psi : \codes_{n - 1} \to \RBThat_n.
  \end{equation}
\end{restatable}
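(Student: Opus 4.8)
The plan is to prove the two maps are mutually inverse bijections by induction on $n$, with the inductive engine being the pair of mutually inverse operations ``attach a leaf'' and ``restrict away a leaf.'' First I would fix the identification coming from the linear order, relabelling the leaves of any $T \in \RBThat_n$ as $0, 1, \ldots, n-1$ in increasing order, and define the restriction filtration $T_0 \subseteq T_1 \subseteq \cdots \subseteq T_{n-1} = T$, where $T_0$ is the single leaf $0$ and $T_i$ is the phylogenetic restriction of $T$ to leaf set $\{0, 1, \ldots, i\}$ (induced subtree with degree-two vertices suppressed), together with the canonical internal labelling: the unique internal node of $T_i$ not present in $T_{i-1}$ is labelled $-i$. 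The map $\Phi$ reads off $a_i$ as the canonical label of the sister of leaf $i$ in $T_i$, and $\Psi$ runs the attachment process in the forward direction.

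The first technical step is a \emph{reversibility lemma}: if $T'$ is obtained from a tree $T$ by attaching a new leaf $\ell$ as sister to a node $v$ (subdividing the edge above $v$ by a new internal node $u$, which becomes the parent of $v$ and $\ell$), then restricting $T'$ back to the leaves of $T$ returns $T$, and this restriction is a label-preserving bijection between the nodes of $T$ and the nodes of $T'$ other than $\ell$ and $u$. Applied repeatedly, this lemma shows that removing leaves $n-1, n-2, \ldots, i+1$ from $T$ in that order recovers $T_i$, so the restriction filtration is genuinely nested and each step changes exactly one internal node; it also yields the invariant that the node-label set of $T_i$ is exactly $\{-i, -(i-1), \ldots, -1, 0, 1, \ldots, i\}$.

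With the invariant in hand, well-definedness is immediate: the sister of leaf $i$ in $T_i$ is a node already present in $T_{i-1}$, whose labels range over $\{-(i-1), \ldots, i-1\}$, so $a_i$ satisfies $-i < a_i < i$ and $\Phi(T) \in \codes_{n-1}$; conversely, for any $(a_1, \ldots, a_{n-1}) \in \codes_{n-1}$ the constraint $-i < a_i < i$ guarantees that $a_i$ is the label of a unique node of the current tree (which carries labels $\{-(i-1), \ldots, i-1\}$), so each attachment in $\Psi$ is legal and produces a well-defined element of $\RBThat_n$. The bijection then follows by induction: the base case $n = 2$ matches the single cherry with $\codes_1 = \{(0)\}$, and the inductive step uses that ``forget leaf $n-1$'' on $\RBThat_n$ corresponds to ``drop the last coordinate'' on $\codes_{n-1}$. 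By the reversibility lemma the fibre over $S \in \RBThat_{n-1}$ is exactly the set of attachment points of $S$, which the canonical labelling puts in bijection with the $2n-3$ admissible values of $a_{n-1}$ in $(-(n-1), n-1)$. Checking $\Phi \circ \Psi = \mathrm{id}$ and $\Psi \circ \Phi = \mathrm{id}$ then reduces, via this fibre structure, to confirming that the restriction filtration of a decoded tree reproduces its construction filtration.

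I expect the main obstacle to be the bookkeeping in the reversibility lemma, specifically verifying that restriction and attachment preserve the canonical labels, so that the integer recorded by $\Phi$ during encoding names the very same node that $\Psi$ attaches to during decoding. Because internal nodes are anonymous in an abstract tree and acquire labels only through the filtration, the argument must show that the label a node receives is independent of which later leaves have already been added; the reversibility lemma is precisely what decouples these, since it certifies that each later attachment sits ``below'' the suppression point and does not disturb the identity of the earlier nodes. Once node identities are pinned down this way, the remaining verifications are routine inductions.
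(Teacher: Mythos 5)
Your proposal is correct and follows the same inductive skeleton as the paper's proof: peel off the largest leaf, use stability of the encoding under last-leaf deletion (your reversibility lemma plays the role of Proposition~\ref{prop:leaf-del-stable}, strengthened by label-preservation), and use the fact that the canonical labels put the nodes of the current tree in bijection with the admissible integer range (your invariant is Proposition~\ref{prop:canonical-label-bijection} in filtration form). The one genuine difference is how bijectivity is concluded. The paper invokes the classical enumeration $|\RBThat_n| = |\codes_{n-1}| = (2n-3)!!$, so that a constructive proof of surjectivity of $\Phi$ suffices, and then separately verifies $\Phi \circ \Psi = \mathrm{id}$ to identify $\Psi$ as the inverse; you instead match the fibres of ``forget leaf $n-1$'' on trees against the fibres of ``drop the last coordinate'' on codes, each of size $2n-3$, which is self-contained (no external counting fact is needed) and yields both composites $\Phi \circ \Psi = \mathrm{id}$ and $\Psi \circ \Phi = \mathrm{id}$ directly. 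The paper's route is shorter given the known count; yours re-derives that count implicitly and establishes the two-sided inverse property in one pass. One small caveat: you define the canonical labels through the restriction filtration (the node created when leaf $i$ is attached gets label $-i$), whereas Algorithm~\ref{alg:ola-encode} as literally stated computes labels by the clade-founder/clade-splitter passes of Algorithm~\ref{alg:canonical}; to prove the theorem about the algorithms as written you must also show these two labelings coincide, which is what the inductive proof of Proposition~\ref{prop:canonical-label-bijection} does in the paper. Since that coincidence is essentially the label-preservation claim inside your reversibility lemma, this is a matter of bookkeeping you would need to make explicit rather than a gap in the argument.
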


The proof of this theorem is given in the Methods section (Section~\ref{sec:methods}).
The OLA encodings of all trees for $n = 4$ is shown in Appendix~\ref{sec:4-leaves-ex}.

\subsection{OLA distance and standard phylogenetic distances}\label{sec:ola-distance-compare}

The OLA encoding is a bijection between the set of rooted, binary trees with $n$ leaves and the set of integer vectors $\codes_{n - 1}$.
This itself is useful. 
However, one may wish to understand to what extent similar OLA vectors correspond to similar trees. 
We introduce a tree-distance based on OLA vectors, and show that it is related to the NNI and SPR distances.

\begin{dfn}
  The {\em OLA distance} between two leaf-ordered binary rooted trees $T, T'$ on $n$ leaves is the Hamming distance between their OLA encodings.
\end{dfn}

\subsubsection{Comparison to NNI distance}

We would ideally prefer if the OLA distance were close to a standard phylogenetic distance, for example within a constant factor.
Unfortunately, this is not the case when compared with NNI distance.

In fact, in the worst case a single NNI move can produce a tree with arbitrarily large OLA distance from some starting tree.
See Figure~\ref{fig:nni-large-ola-dist} for such an example.
This example shows two trees on $n$ leaves which are related by an NNI move, modifying the edge highlighted in red.
Their OLA codes differ in $n - 2$ entries.
This is the largest possible OLA distance between two trees on $n$ leaves (since the first entry of the OLA code is always $0$).

\begin{figure}[h]
  \begin{minipage}{0.45\textwidth}
    \centering
    \begin{tikzpicture}[scale=0.6,
      level 1/.style={level distance=1cm},
    ]
    \node {$\rho$} [grow=down,]
      child {node[inner,label=above right:{-3}] {}
        child {node[inner,label={-4}] {}
          child {node[left=-2.0mm] {$\iddots$}
            child {node[left=4mm,inner,label=above left:{-$n$+1}] {}
              child {node[highlight,inner,label={-1}] (-1) {}
                child {node[highlight,inner,label={-2}] (-2) {}
                  child {node[leaf] {0}}
                  child {node[leaf] {2}}
                }
                child {node[leaf] {1}}
              }
              child {node[leaf] {\small $n$-1}}
            }
          }
          child {node[leaf] {4}}
        }
        child {node[leaf] {3}}
      };
    \draw[highlightedge] (-1) -- (-2);
    \draw (-1) -- (-2);
    \node[fit=(current bounding box),inner ysep=0.2cm,inner xsep=0] {};
    \end{tikzpicture}
    \\
    {\tt (0,0,-1,-1,...,-1)}
  \end{minipage}
  \begin{minipage}{0.45\textwidth}
    \centering
    \begin{tikzpicture}[scale=0.6,
      level 1/.style={level distance=1cm},
    ]
    \node {$\rho$} [grow=down,]
      child {node[inner,label=above right:{-3}] {}
        child {node[inner,label={-4}] {}
          child {node[left=-1.0mm] {$\iddots$}
            child {node[left=4mm,inner,label=above left:{-$n$+1}] {}
              child {node[highlight,inner,label={-2}] (-2) {}
                child {node[highlight,inner,label={-1}] (-1) {}
                  child {node[leaf] {0}}
                  child {node[leaf] {1}}
                }
                child {node[leaf] {2}}
              }
              child {node[leaf] {\small $n$-1}}
            }
          }
          child {node[leaf] {4}}
        }
        child {node[leaf] {3}}
      };
    \draw[highlightedge] (-1) -- (-2);
    \draw (-1) -- (-2);
    \node[fit=(current bounding box),inner ysep=0.2cm,inner xsep=0] {};
    \end{tikzpicture}
    \\
    {\tt (0,-1,-2,-2,...,-2)}
  \end{minipage}
  \caption{Two trees which differ by an NNI move, with large OLA distance.}
  \label{fig:nni-large-ola-dist}
\end{figure}
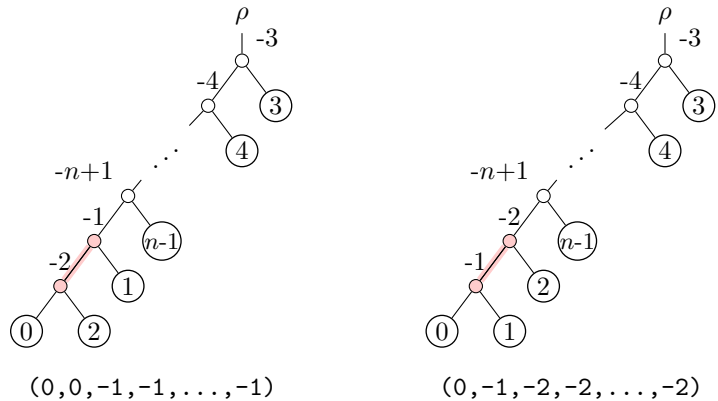

However, as we show next, the expected OLA distance between two trees which differ by an NNI move is constant in expectation, when averaged over all moves from a given starting tree.

\begin{thm}\label{thm:nni-move-bound}
Suppose $T$ is a fixed phylogenetic tree in $\RBThat_n$, and let $T'$ be a random tree which differs from $T$ by an NNI move, chosen uniformly.
Then the expectation of the OLA distance has the constant upper bound
\[
  \EE(d_{\OLA}(T, T')) \leq 4.
\]
\end{thm}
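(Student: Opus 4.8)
The plan is to fix the starting tree $T$ and exploit linearity of Hamming distance over coordinates. Writing $\Phi(T) = (a_1,\dots,a_{n-1})$ and letting $m$ range over the $N = 2(n-2)$ NNI moves (two per internal edge, written $T \mapsto T_m$), I would first record the identity
\[
  \sum_{m} d_{\OLA}(T, T_m) \;=\; \sum_{i=1}^{n-1} C_i, \qquad C_i := \#\{\, m : a_i(T_m)\neq a_i(T)\,\},
\]
so that the claim $\EE(d_{\OLA}(T,T'))\le 4$ reduces to the purely combinatorial estimate $\sum_i C_i \le 4N$. The whole difficulty is to understand, coordinate by coordinate, how many NNI moves can disturb a given entry.

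The key preliminary step is a closed form for $a_i$ in terms of the full tree $T$ rather than the restricted tree $T_i$ on leaves $\{0,\dots,i\}$. I would show that the canonical label of any internal node $z$ equals $-\mu(z)$, where $\mu(z)$ is the smallest leaf in whichever child subtree of $z$ does not contain the overall minimum leaf below $z$, and then check that restricting to $\{0,\dots,i\}$ does not alter this value for the node that becomes leaf $i$'s sister. This yields the clean description $a_i = \labelfn(v_i^\ast)$, where $v_i^\ast$ is found by walking from leaf $i$ toward the root, stopping at the first side branch that already contains a leaf smaller than $i$, and taking the root (LCA) of that branch restricted to $\{0,\dots,i-1\}$. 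Two facts fall out: each coordinate ``references'' exactly one node, so $\sum_z R(z) = n-1$ for $R(z) := \#\{i : v_i^\ast = z\}$; and $a_i$ can change only if the label of $v_i^\ast$ changes or the node $v_i^\ast$ itself changes.

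Next I would pin down the footprint of a single NNI move. An NNI on the internal edge $e=(u,v)$ (with $u$ the parent) alters the child split only at $u$ and at $v$, and alters the clade set only at $v$; every other node retains both its clade and its split, hence its label. Combining this with the formula for $a_i$, I expect to prove that the only coordinates a move on $e$ can change are (i) those referencing $u$ or $v$, and (ii) the three ``boundary'' coordinates indexed by the minimum leaves of the three subtrees surrounding $e$. Item (i) is exactly where the blow-up of Figure~\ref{fig:nni-large-ola-dist} lives: one move can relabel a node referenced by linearly many coordinates, so no per-move bound is possible. The resolution is to sum over moves in the other order. Since each node lies on at most three internal edges (its parent edge and its two child edges), it is an endpoint of at most a bounded number of moves, so
\[
  \sum_{m}\#\{\, i : v_i^\ast \text{ is an endpoint of } e_m \,\}
  \;=\; \sum_{i}\#\{\, m : v_i^\ast \text{ is an endpoint of } e_m \,\}
  \;\le\; 6\,(n-1),
\]
where $e_m$ is the edge of move $m$, while item (ii) contributes at most $3$ per move, i.e.\ at most $3N$ overall. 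Together these already give $\sum_i C_i = O(n)$ and hence a constant bound on the expectation.

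The main obstacle is sharpening this constant from the crude value above down to the stated $4$. This requires the careful analysis hidden in the word ``can'' in the previous paragraph: determining exactly when a move on $e$ flips the label of $u$ or of $v$ (which is governed by the relative order of the three subtree minima), exactly when a referencing coordinate ``hops'' between $u$ and $v$, and which of the three boundary coordinates genuinely move. I would organize this as an amortized charge of each changed coordinate either to the reference count $R(z)$ of the node whose label flips, using $\sum_z R(z) = n-1$, or to one of the three subtree minima of the move, and then verify that the worst-case totals telescope to $4N$. The behavior at the root $\rho$, whose incident edge admits no NNI and so trims the incidence count, is a fiddly case I would treat separately.
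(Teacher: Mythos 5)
Your framework is sound as far as it goes: the double count $\sum_m d_{\OLA}(T,T_m) = \sum_i C_i$, the closed form $a_i = \labelfn(v_i^\ast)$ via subtree minima, and the footprint lemma (a move on $e$ can only disturb coordinates referencing an endpoint of $e$ or indexed by one of the three surrounding subtree minima) are all correct, and together they do give $\EE(d_{\OLA}(T,T')) = O(1)$. But the theorem asserts the specific constant $4$, and your accounting stops short of it: charging up to $6$ moves per referenced node plus $3$ boundary coordinates per move yields $\sum_i C_i \leq 6(n-1) + 3N$, i.e.\ an expectation of roughly $3 + 3(n-1)/(n-2) \approx 6$. You explicitly defer the reduction from this to $4$ to an unexecuted amortized charging argument (``verify that the worst-case totals telescope to $4N$''), and that deferred step is exactly where the content of the theorem lies; as written, the proposal has a genuine gap.

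The refinements that close it are sharper versions of both of your counts, and they come out of the case analysis you flagged but did not carry out. First, the footprint is asymmetric in the two endpoints of the move edge. Writing $-j$ for the lower label and $-k$ for the upper one, the move either preserves both labels across the edge, in which case \emph{exactly one} coordinate changes (namely $a_j$), or it swaps the two labels, in which case the changed coordinates are the $c_k$ entries equal to $-k$ (those referencing the upper node) plus \emph{at most one} of $a_j, a_k$; coordinates referencing the lower node keep their value $-j$ even though the node carrying that label moves, so they never contribute. Hence every move satisfies $d_{\OLA}(T, T_m) \leq 1 + c_{k(m)}$, with $c_{k(m)}$ counting only upper-node references. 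Second, the incidence count should then be taken only over \emph{upper} endpoints: a node is the upper endpoint of at most $4$ moves (at most two internal child edges, two moves each), not $6$. Combining, $\sum_m d_{\OLA}(T,T_m) \leq N + 4\sum_k c_k \leq N + 4(n-1)$, so $\EE(d_{\OLA}(T,T')) \leq 1 + \frac{4(n-1)}{2n-4} \leq 4$ for $n \geq 4$, with small $n$ handled by the trivial bound $d_{\OLA} \leq n-2$. Without both sharpenings --- the preserved/swapped label dichotomy governed by the relative order of the three subtree minima, and the upper-endpoint-only count --- your constant cannot be pushed below roughly $6$, so these are not optional polish but the missing core of the proof.
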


\begin{proof}[Proof of Theorem \ref{thm:nni-move-bound}]
  Consider trees $T$ and $T'$ which differ by an NNI move.

  Suppose the NNI move is achieved by first contracting an edge in $T$ which lies between nodes with canonical labels $-j$ and $-k$, with $-k$ on top (Figure~\ref{fig:nni-move}).
  After the NNI move, in $T'$, the modified edge either still has label $-k$ on top, or switches to have label $-j$ on top.
  In the first case, where the top-label stays the same, we claim that $d_{\OLA}(T, T')$ is exactly $1$.
  This is because the only change in the OLA code is at entry $a_j$.

  In the second case, where the top-label changes in $T'$, let $c_k$ denote the number of entries in the OLA encoding of $T$ which are $-k$'s.
  We claim that $d_{\OLA}(T, T')$ is bounded above by $1 + c_k$.
  Indeed, if some entry of the OLA encoding changes between $T$ and $T'$, then that entry is either $a_j$ or $a_k$ (which accounts for the ``1''),
  or else some entry $a_\ell$, 
  for a leaf which attaches just above the NNI-modified edge. 
  In the latter case, we would have $a_\ell = -k$ in the OLA encoding of $T$ and $a_\ell = -j$ in the encoding of $T'$.

  Thus, accounting for both cases,
  \[
    \EE(d_{\OLA}(T, T')) \leq \EE(1 + c_k)
  \]
  where $c_k = c_{k(T')}$ depends on the NNI move chosen (which determines $T'$).
  In a tree with $n$ leaves, there are $2n - 4$ possible NNI moves.
  For each index $k$ there are at most four NNI moves with ``upper node'' $-k$.
  In other words, for each fixed $k$ there are at most four NNI-neighbors $T'$ such that $k(T') = k$.
  Thus,
  \begin{align*}
    \EE(1 + c_{k(T')}) &= 1 + \frac{1}{2n - 4} \sum_{T'} c_{k(T')} \\
    &\leq 1 + \frac{4}{2n - 4} \sum_{k = 1}^n c_k \\
    &= 1 + \frac{4n - 4}{2n - 4}
  \end{align*}
  where the first sum varies over $T'$ which are one NNI move away from $T$.
  In the last line, we use the fact that $\sum_{k = 1}^n c_k \leq n - 1$, since the OLA encoding of $T$ has $n - 1$ entries.
  Finally, we have that
  \[
    1 + \frac{4n - 4}{2n - 4} 
    \leq 4 \qquad\text{when } n \geq 4,
  \]
  so the claim is proved.
\end{proof}

  \begin{figure}[h]
    \begin{minipage}{0.95\textwidth}
      \centering
      \raisebox{-0.5\height}{\begin{tikzpicture}[scale=0.5,
        level 1/.style={level distance=1cm},
      ]
      \node {} [grow=down]
          child {node[inner,label=above right:{-$k$}] {}
            child {node[inner,label={-$j$},left=0.5cm] {}
              child {node[left=0.2cm] {}
                child {node[tclade] {$i$}}
              }
              child {node[right=0.2cm] {}
                child {node[tclade] {$j$}}
              }
            }
            child {node[right=0.3cm] {}
              child {node[tclade] {$k$}}
            }
          };
      \end{tikzpicture}}
      \quad$\rightsquigarrow$\quad
      \raisebox{-0.5\height}{\begin{tikzpicture}[scale=0.5,
        level 1/.style={level distance=1cm},
      ]
      \node {} [grow=down]
          child {node[inner,label=above right:{-$k$}] {}
            child {node[left=0.3cm] {}
              child {node[tclade] {$i$}}
            }
            child {node[inner,label={-$j$},right=0.5cm] {}
              child {node[left=0.2cm] {}
                child {node[tclade] {$k$}}
              }
              child {node[right=0.2cm] {}
                child {node[tclade] {$j$}}
              }
            }
          };
      \end{tikzpicture}}
    \end{minipage}\\[2em]
    \begin{minipage}{0.95\textwidth}
      \centering
      \raisebox{-0.5\height}{\begin{tikzpicture}[scale=0.5,
        level 1/.style={level distance=1cm},
      ]
      \node {} [grow=down]
          child {node[inner,label=above right:{-$k$}] {}
            child {node[inner,label={-$j$},left=0.5cm] {}
              child {node[left=0.2cm] {}
                child {node[tclade] {$i$}}
              }
              child {node[right=0.2cm] {}
                child {node[tclade] {$j$}}
              }
            }
            child {node[right=0.3cm] {}
              child {node[tclade] {$k$}}
            }
          };
      \end{tikzpicture}}
      \quad$\rightsquigarrow$\quad
      \raisebox{-0.5\height}{\begin{tikzpicture}[scale=0.5,
        level 1/.style={level distance=1cm},
      ]
      \node {} [grow=down]
          child {node[inner,label=above right:{-$j$}] {}
            child {node[inner,label={-$k$},left=0.5cm] {}
              child {node[left=0.2cm] {}
                child {node[tclade] {$i$}}
              }
              child {node[right=0.2cm] {}
                child {node[tclade] {$k$}}
              }
            }
            child {node[right=0.3cm] {}
              child {node[tclade] {$j$}}
            }
          };
      \end{tikzpicture}}
    \end{minipage}
    \caption{Trees differing by an NNI move. For the purposes of the proof of Theorem~\ref{thm:nni-move-bound}, the left tree is $T$ and the right tree is $T'$.}
    \label{fig:nni-move}
  \end{figure}
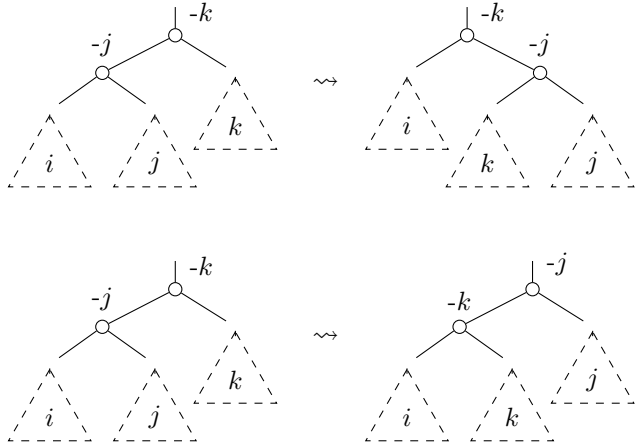
  
The bound in Theorem~\ref{thm:nni-move-bound} is in agreement with computational experiment, as shown in Figure~\ref{fig:nni-neighbors}.

\subsubsection{Comparison to SPR distance}

A common method for exploring the space of phylogenetic trees is through the use of subtree-prune-regraft (SPR) moves.
A (rooted) SPR move consists of choosing a subtree to prune, and then choosing a location on the remaining tree at which to regraft the pruned subtree.

\begin{figure}[h]
  \begin{minipage}{0.45\textwidth}
    \centering
    \begin{tikzpicture}[scale=0.6,
      level 1/.style={level distance=1cm},
    ]
    \node {$\rho$} [grow=down,]
      child {node[inner,label=above right:{-1}] {}
        child {node[inner,label={-2}] {}
          child {node[inner,label={-3}] {}
            child {node[left=-2.0mm] {$\iddots$}
              child {node[left=4mm,inner,label=above left:{-$n$+2}] (-n+2) {}
                child {node[inner,label=above left:{-$n$+1}] (-n+1) {}
                  child {node[leaf,highlight] (0) {0}}
                  child {node[leaf] {\small $n$-1}}
                }
                child {node[leaf] {\small $n$-2}}
              }
            }
            child {node[leaf] {3}}
          }
          child {node[leaf] {2}}
        }
        child {node[leaf] {1}}
      };
    \draw[highlightedge] (0) -- (-n+1);
    \draw (0) -- (-n+1);
    \node[fit=(current bounding box),inner ysep=0.2cm,inner xsep=0] {};
    \end{tikzpicture}
    \\
    {\tt (0,0,0,...,0)}
  \end{minipage}
  \begin{minipage}{0.45\textwidth}
    \centering
    \begin{tikzpicture}[scale=0.6,
      level 1/.style={level distance=1cm},
    ]
    \node {$\rho$} [grow=down,]
      child {node[inner,label=above right:{-1}] (-1) {}
        child {node[leaf,highlight] (0) {0}}
        child {node[inner,label={-2}] {}
          child {node[leaf] {1}}
          child {node[inner,label={-3}] {}
            child {node[leaf] {2}}
            child {node[inner,label={-4}] {}
              child {node[leaf] {3}}
              child {node[right=-1.0mm] {$\ddots$}
                child {node[right=4mm,inner,label=above right:{-$n$+1}] {}
                  child {node[leaf] {\small $n$-2}}
                  child {node[leaf] {\small $n$-1}}
                }
              }
            }
          }
        }
      };
    \draw[highlightedge] (0) -- (-1);
    \draw (0) -- (-1);
    \node[fit=(current bounding box),inner ysep=0.2cm,inner xsep=0] {};
    \end{tikzpicture}
    \\
    {\tt (0,1,2,...,n-2)}
  \end{minipage}
  \caption{Two trees which differ by an SPR move, with large OLA distance.}
  \label{fig:spr-large-ola-dist}
\end{figure}
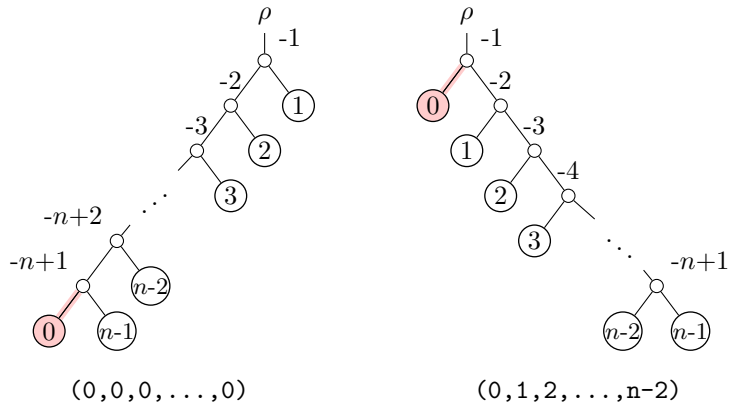

We introduce some standard terminology: the \emph{height} of a rooted tree is the number of edges between the root node and the farthest leaf node.
We use $\theight(T)$ to denote the height of the tree $T$.

\begin{dfn}
  The {\em height} of a (rooted) tree is the distance from the root to the farthest leaf.
\end{dfn}

\begin{conj}\label{conj:spr-move-bound}
  Suppose $T \in \RBThat_n$ is a fixed phylogenetic tree, and let $T'$ be a random tree which differs from $T$ by a single SPR move, chosen uniformly.
  Then the expectation of the OLA distance satisfies the asymptotic upper bound
  \[
    \EE(d_{\OLA}(T, T')) = O(\theight(T)),
  \]
  where $\theight(T)$ denotes the height of $T$.
\end{conj}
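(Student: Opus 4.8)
The plan is to prove the bound by linearity of expectation over the coordinates of the OLA code, following the strategy of the proof of Theorem~\ref{thm:nni-move-bound} but with a finer analysis of how a single SPR move perturbs the code. Writing $\mathcal{N}(T)$ for the SPR-neighborhood of $T$, with $|\mathcal{N}(T)| = \Theta(n^2)$, I would first swap the order of summation to obtain
\[
  \EE(d_{\OLA}(T,T')) = \frac{1}{|\mathcal{N}(T)|}\sum_{i=1}^{n-1}\#\{T'\in\mathcal{N}(T) : a_i(T)\neq a_i(T')\},
\]
so that it suffices to show the double sum is $O(n^2\,\theight(T))$. The first genuine step is a structural lemma giving a closed form for the code that is better adapted to SPR moves than the sequential algorithm: running the decoding in reverse, $a_i$ is the canonical label of the \emph{sister of leaf $i$ in the restriction $T_{\le i}$} of $T$ to the leaf set $\{0,1,\dots,i\}$, and the canonical label of an internal node $u$ with pendant leaf-sets $L_A,L_B$ equals $-\max(\min L_A,\min L_B)$ (with the convention that a leaf-sister $j$ contributes $a_i=j$). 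I would prove both facts by tracking which node is suppressed when leaves are removed in decreasing order.

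The second step exploits the defining feature of SPR: restricting a tree to the complement of the pruned leaf set is invariant under the move. If $S$ is the pruned subtree and $m_0=\min L_S$, this immediately gives that $T_{\le i}$, hence $a_i$, is unchanged for every $i<m_0$. For $i\ge m_0$ I would classify the changed entries into three types: (I) entries indexed by leaves $i\in L_S$, whose sister in $T_{\le i}$ may be relocated together with $S$; (II) entries whose sister node persists but whose canonical label changes because deleting or inserting $L_S$ shifts a subtree-minimum, so that $\max(\min L_A,\min L_B)$ moves; and (III) entries whose sister changes topologically, which is localized at the prune and regraft sites. The key observation driving the height bound is that a subtree-minimum can shift only on the path from the prune site up to the first ancestor whose opposite side already contains a leaf smaller than $m_0$, and symmetrically along the regraft path; hence at most $O(\theight(T))$ internal nodes can change their canonical label under any single move.

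The crux, and the step I expect to be the main obstacle, is controlling type (II). A single node $u$ labeled $-m$ whose canonical label changes simultaneously flips \emph{every} entry that points to it, and there are $c_m$ such entries in the notation of Theorem~\ref{thm:nni-move-bound}; since $c_m$ can be as large as $n-1$, there is no useful per-move bound, and one must average as in the NNI argument. The difficulty is that the naive combination of the $O(\theight(T))$ label-changing nodes with the weights $c_m$ only yields the trivial bound $O(n)$ (which is already implied by the maximal OLA distance $n-2$). To extract the sharper $O(\theight(T))$ one must perform the averaging as a weighted incidence count: bound $\sum_{T'}\sum_{u\ \mathrm{label\text{-}changes}} c_{m(u)}$ by summing over nodes $u$ the product of $c_{m(u)}$ with the number of SPR moves whose prune- or regraft-path reaches $u$ before encountering a smaller leaf, and then invoke $\sum_m c_m\le n-1$ together with the $\Theta(n^2)$ normalization. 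Making this count tight enough to produce a single factor of $\theight(T)$ rather than $\theight(T)\cdot\max_m c_m$ is where the argument must be most careful; the contributions of types (I) and (III) should then be absorbed, since they are supported near the $O(\theight(T))$-length prune and regraft paths and account for only a bounded number of entries per affected node.
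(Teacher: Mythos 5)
The statement you are trying to prove is a \emph{conjecture}: the paper offers no proof, only a heuristic (decompose an SPR move into at most $2\,\theight(T)$ NNI moves and invoke Theorem~\ref{thm:nni-move-bound}, which the authors themselves note is not rigorous because those NNI moves are not distributed like a uniform NNI move) together with simulations. So your proposal must stand on its own, and by your own account it does not: the step you call ``the crux'' is precisely the point at which it stops. That said, your preliminary structure is sound and genuinely different from the paper's heuristic. The closed form is correct (the canonical label of an internal node with child leaf-sets $L_A, L_B$ is $-\max(\min L_A, \min L_B)$, and $a_i$ is the label of the sister of leaf $i$ in $T|_i$), the invariance of $a_i$ for $i < \min L_S$ is correct, and the claim that only $O(\theight(T))$ internal nodes change canonical label under one SPR move (the prune path up to the first ancestor whose opposite side has a leaf below $\min L_S$, plus the analogous regraft path) is also correct.

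The gap is exactly where you flag it, and it is not a technicality; it is the same obstruction that defeats the paper's heuristic. For a node $u$ whose label $-m$ changes, all $c_m$ entries pointing at $u$ can flip, and the moves that flip $u$ include every regraft of a subtree with minimum below $m$ onto any edge of $u$'s splitter side $B_u$; there are order $n\,|B_u|$ such moves, not $n\,\theight(T)$. Your weighted incidence count therefore reads $\frac{1}{\Theta(n^2)}\sum_u c_{m(u)}\bigl(n\,\theight(T) + n\,|B_u|\bigr)$, and even adding the bound $\sum_m c_m \le n-1$ and a fact you did not note but could use (the parents of the leaves pointing at $-m$ are distinct strict ancestors of $-m$, hence $c_m < \theight(T)$), the second term only comes out as $O(\theight(T)^2)$, via $\sum_u |B_u| = O(n\,\theight(T))$. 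So your framework does yield a rigorous bound $\EE\bigl(d_{\OLA}(T,T')\bigr) = O(\theight(T)^2)$ --- indeed essentially a per-move worst-case bound of that order, which is more than the paper establishes --- but the passage from $\theight^2$ to the conjectured $\theight$ requires showing that label flips at nodes with large $c_m$ are rare among SPR moves, or that most of the $c_m$ entries they would flip are invisible (a flipped entry must also satisfy $i > \min L_S$), and no such argument is supplied. As written, this is a credible research plan with correct lemmas, not a proof; the conjecture remains open.
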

  
Here is a heuristic argument for why we believe Conjecture~\ref{conj:spr-move-bound} holds.
A chosen SPR move may be expressed as a combination of multiple NNI moves, where the number of NNI moves needed is equal to the distance between the prune-location and the regraft-location in the starting tree.
This distance is bounded by $2 \cdot \theight(T)$.
Each NNI move induces an expected OLA-distance of $O(1)$, by Theorem~\ref{thm:nni-move-bound}, so these distances add up to $O(\theight(T))$.
However, the place where this argument fails to be rigorous is that the distribution of the NNI moves which appear from ``decomposing'' a uniformly-random SPR move is different from a uniformly-random NNI move, so we cannot directly use Theorem~\ref{thm:nni-move-bound} in this manner.

Nevertheless, Conjecture~\ref{conj:spr-move-bound} is supported by the following computational experiment.
In Figure~\ref{fig:spr-neighbors}, we choose a random tree, sample some of its SPR neighbors uniformly, and then compute the average OLA distance to these neighbors.
This average OLA distance is plotted against the height of the starting tree.


\begin{figure}[h]
  \centering
  \includegraphics*[scale=0.8]{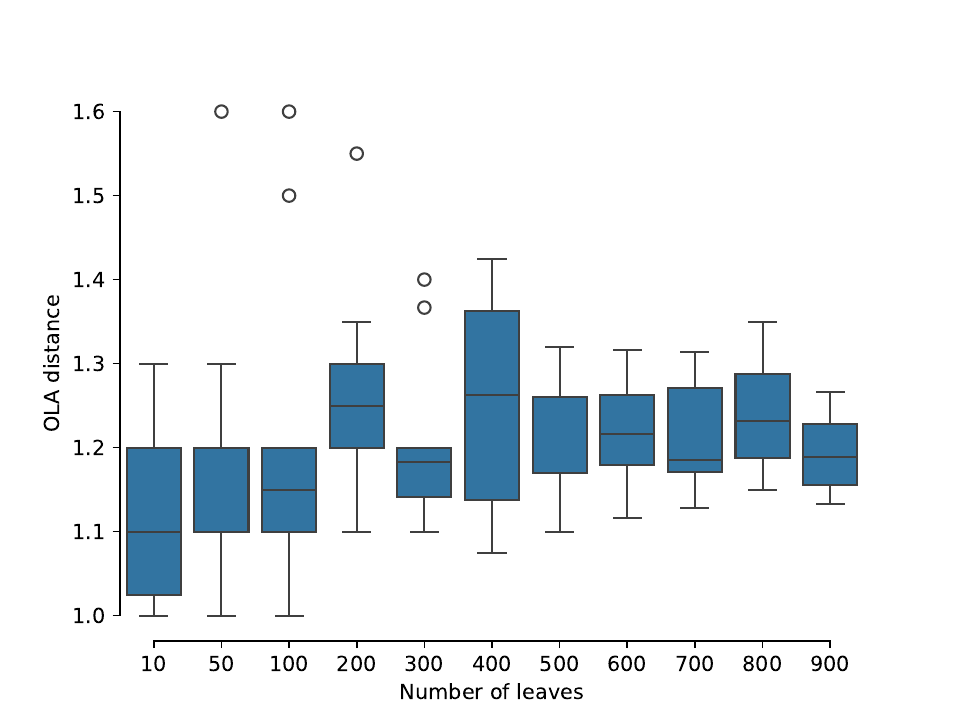}
  \caption{The expected OLA distance between a tree and its NNI neighbors. 
  The size of the tree varies from 10 leaves to 900 leaves.
  Theorem~\ref{thm:nni-move-bound} states that the expected OLA distance is at most $4$, while the simulated data suggests the actual expected distance is lower.}
  \label{fig:nni-neighbors}
\end{figure}

\begin{figure}[h]
  \centering
  \includegraphics*[scale=0.8]{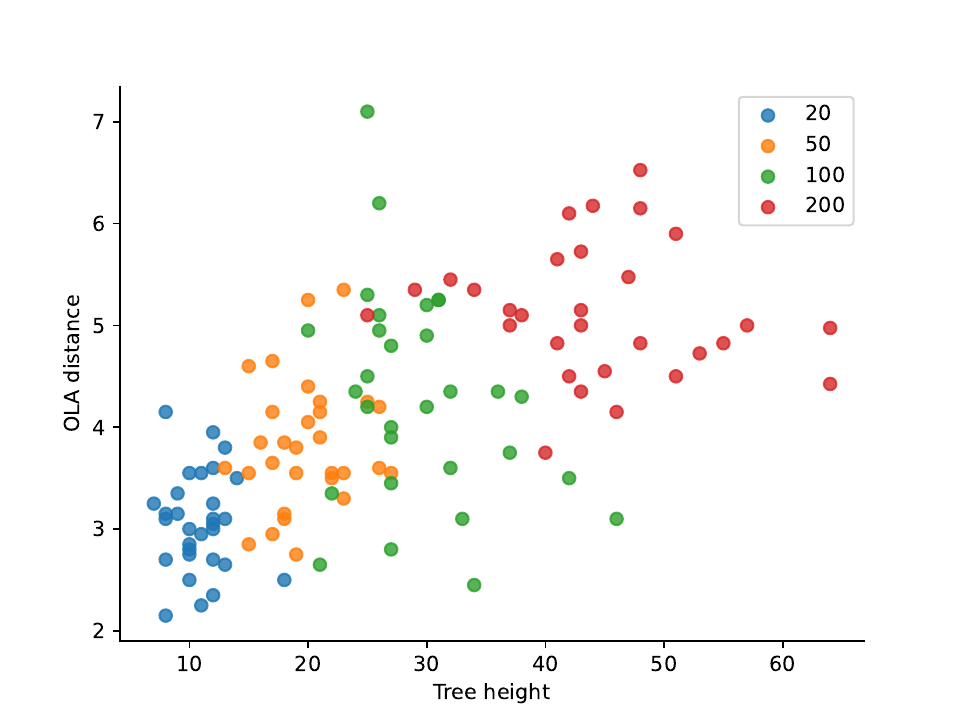}
  \caption{The expected OLA distance between a tree and its SPR neighbors. 
  The size of the tree varies from 20 leaves to 200 leaves, and is indicated by color.
  Conjecture~\ref{conj:spr-move-bound} and the corresponding heuristic argument suggest that this expected OLA distance is bounded by $8$ times the tree height, which is easily satisfied by the data in this simulation.}
  \label{fig:spr-neighbors}
\end{figure}

\section{Discussion}

In this paper, we introduce a new encoding of phylogenetic trees: the ``ordered leaf attachment'' (OLA) code.
This encoding is unique in being linear-time computable, and in having a simple description of the set of vectors corresponding to trees.
The OLA encoding induces a distance on the set of trees, and we showed that this distance is in expectation approximately constant between NNI neighbors.
We conjecture that the distance between a tree $T$ and an SPR neighbor $T'$ is in expectation asymptotically bounded above by the height of $T$.

There has been a long history of developing bijections between phylogenetic trees and other mathematical objects.
In 1918, Pr\"ufer~\cite{prufer} defined an encoding, now known as the ``Pr\"ufer code'', of trees which have an ordering on all nodes (both leaf nodes and internal nodes), which proves combinatorially that the number of trees on $n$ vertices is $n^{n - 2}$~\cite{cayley}.
Erd\H{o}s--Szekely~\cite{erdos-szekely} found a bijection between rooted phylogenetic trees and certain partitions of sets of integers.
Diaconis and Holmes~\cite{diaconis-holmes-matchings,diaconis-holmes-walks} found a bijection from rooted phylogenetic trees on $n$ leaves to perfect matchings on $2(n - 1)$ nodes.

There has been a recent surge of interest in developing embeddings of phylogenetic trees, initiated by the Phylo2Vec encoding introduced by Penn et al.~\cite{penn-etal}.
This encoding has a similarly simple set of vectors corresponding to trees at the OLA encoding, however takes quadratic time to compute in general.
For ``Yule-type'' trees, our OLA encoding agrees with the Phylo2Vec encoding,
in both output and computational complexity.

Although both the Phylo2Vec and the OLA encoding have the same basic idea, OLA requires only linear time.
The difference lies in the way that internal nodes are labeled.
OLA uses negative integer labels which, because they are disjoint from the positive leaf labels, can be used persistently throughout the tree construction process.
In contrast, for Phylo2Vec, internal nodes are labeled with positive integers, such that the internal node labels have to be regularly updated to avoid conflicts with the leaf labels.

Using another approach, Chauve et al.~\cite{chauve-etal} introduce the LTS vector of a tree, for ``labeling taxon sequence.'' 
This approach takes linear time to compute.
However, the construction is somewhat complex and the set of vectors corresponding to trees does not have a simple definition.

Both the Phylo2Vec and LTS papers also investigate distances between trees that are induced by the encoding. 
Specifically, Penn et al. \cite{penn-etal} examine how their Phylo2Vec-based Hamming distance relates to traditional tree metrics like the Subtree-Prune-Regraft (SPR) and Robinson-Foulds distances.
Like for the OLA encoding, a small Phylo2Vec distance can correspond to a small or a large SPR distance, although experiments with SPR random walks show an approximately linear relationship between the two distances for small numbers of SPRs.
Chauve et al.~\cite{chauve-etal} also introduce a ``HOP distance'' based on the LTS representation and analyze its relationship to NNI and SPR distances, showing that it often provides a better approximation of the SPR distance than the RF distance.

All of these encodings, including OLA, require an ordering of the set of taxa.
Although this doesn't pose a problem when calculating encodings on trees for a specific set of taxa, it does have disadvantages for the corresponding tree distances.
Namely, classical tree distances only depend on a pair of leaf-labeled trees, while the Phylo2Vec, HOP, and OLA distances depend on the ordering of the taxa.
One can obtain a well-defined distance by taking the minimum or averaging over all possible orderings, but this may be expensive to compute.

We note that while our definition of the OLA code is for rooted trees, it works equally well for unrooted trees.
This is done by picking a leaf and making it the root.
Although this requires an arbitrary choice, it introduces the same type of asymmetry as the requirement of choosing a leaf ordering, with the same associated drawbacks.

In the future, we will quantify the degree to which these encodings give natural descriptions of tree space.
We also leave these more technical questions for future work:

\begin{itemize}
  \item For a given tree shape (a rooted, bifurcating tree \emph{without} leaf labels), for a randomly chosen leaf ordering, 
  what is the expected number of negative entries in the corresponding OLA encoding? Is there any correlation with, e.g., tree balance?

  \item What is the effect of ``shuffling'' on OLA distances? Namely, for two trees with the same leaf taxa, 
  how does the OLA distance change as we shuffle the leaf order, i.e. apply a permutation to the leaf labels on both trees together?
  Can be bound the variation of OLA distance cause by shuffling?

  \item Can we extend the OLA encoding to cover multifurcating tree topologies?

\end{itemize}

In the Appendix, we 
provide a suggested extension to trees with branch lengths, 
show a more involved example of an OLA encoding, 
prove a theorem investigating random walks on trees using the OLA encoding, 
do an empirical analysis of OLA distance along random SPR and NNI walks, 
and end with an illustration of all the OLA encodings on three and four leaves.

\section{Methods}\label{sec:methods}

\subsection{OLA encoding details}
\label{sec:ola-details}

Here we give a more thorough description, with proofs, of the OLA code of a rooted, binary, leaf-ordered tree.

\subsubsection*{Theoretical description}

Given a tree with leaves labeled with $0, 1, 2, \ldots, n - 1$, we can imagine the tree being constructed from the empty tree by adding one leaf at a time in the specified order,
as shown in Figures~\ref{fig:yule-ola-code} and \ref{fig:non-yule-ola-code}.
As mentioned in the introduction, the OLA code of a rooted, bifurcating tree consists of the following main steps:

\begin{enumerate}
\item \textbf{Canonical interal label step}.
When leaf node $i$ is added, this creates a new internal node in the tree. 
Label this internal node as $-i$.

\item \textbf{OLA entry step}.
When leaf node $i$ is added, it is the sister to some previously-existing node (either a leaf node or an internal node).
Record the sister leaf's label as $a_i$, the $i$-th entry of the OLA encoding.
\end{enumerate}


\subsubsection*{Practical description}

Instead of following the steps above, where both the OLA encoding and canonical internal node labels are computed jointly at each step of the tree-contruction process,
we can make the following change.
We find it more practical to compute all canonical internal labels in the first ``phase'' of the encoding, and then locate all sister-nodes to the leaf attachments in the second ``phase.''





\subsection{Canonical internal node labels}\label{sec:canonical-labels}

Here we describe a method of labeling the internal nodes of a phylogenetic tree, whose leaves have a chosen linear ordering.
We assume this leaf ordering is encoded by assigning the labels $0, 1, 2, \ldots, n - 1$ to leaves.

\subsubsection{Label algorithm}


To obtain canonical internal node labels, we apply the following steps:

\begin{enumerate}
  \item Label internal nodes
  \begin{enumerate}
    \item Root-ward traversal: label each node (either internal or leaf) by its ``clade-founder'' label.
    At leaves, this is equal to the leaf label.
    At internal nodes, this is equal to the minimum of the two children's clade-founder labels.
    (See Figure~\ref{fig:clade-founder}.)

    \item Leaf-ward traversal: label each internal node by its ``clade-splitter'' label.
    This is equal to the maximum of the two children's clade-founder labels.
    (See Figure~\ref{fig:clade-splitter}.)

    \item Each internal node has canonical label $= -1\cdot$(clade-splitter label)
  \end{enumerate}
\end{enumerate}

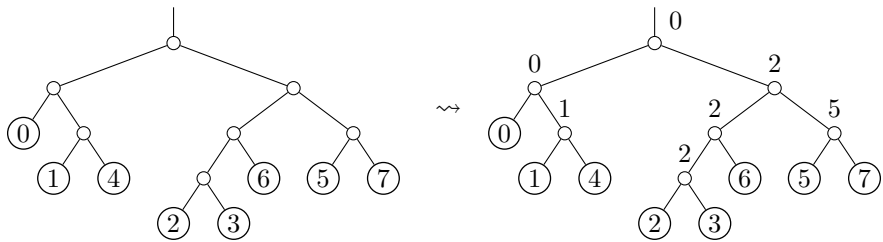
\begin{figure}[h]
  \centering
  \raisebox{-0.5\height}{\begin{tikzpicture}[scale=0.4,
    inner/.style={
      draw,circle,inner sep=0pt,minimum size=5pt
    },
    level 1/.style={sibling distance=2.0cm}
  ]
  \node {} [grow=down]
      child {node[inner] {}
        child {node[inner,left=11mm] {}
          child {node[leaf] {0}}
          child {node[inner] {}
            child {node[leaf] {1}}
            child {node[leaf] {4}}
          }
        }
        child {node[inner,right=11mm] {}
          child {node[inner,left=3mm] {}
            child {node[inner] {}
              child {node[leaf] {2}}
              child {node[leaf] {3}}
            }
            child {node[leaf] {6}}
          }
          child {node[inner,right=3mm] {}
            child {node[leaf] {5}}
            child {node[leaf] {7}}
          }
        }
      };
  \end{tikzpicture}}
  \quad$\rightsquigarrow$\quad 
  \raisebox{-0.5\height}{\begin{tikzpicture}[scale=0.4,
    level 1/.style={sibling distance=2.0cm}
  ]
  \node {} [grow=down]
      child {node[inner,label=above right:{0}] {}
        child {node[left=11mm,inner,label={0}] {}
          child {node[leaf] {0}}
          child {node[inner,label={1}] {}
            child {node[leaf] {1}}
            child {node[leaf] {4}}
          }
        }
        child {node[right=11mm,inner,label={2}] {}
          child {node[left=3mm,inner,label={2}] {}
            child {node[inner,label={2}] {}
              child {node[leaf] {2}}
              child {node[leaf] {3}}
            }
            child {node[leaf] {6}}
          }
          child {node[right=3mm,inner,label={5}] {}
            child {node[leaf] {5}}
            child {node[leaf] {7}}
          }
        }
      };
  \end{tikzpicture}}
  \caption{Assigning canonical internal node labels, step 1. Each internal node is labeled with its clade-founder label.}
  \label{fig:clade-founder}
\end{figure}

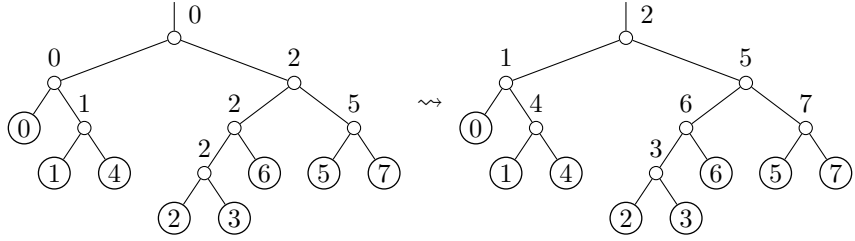
\begin{figure}[h]
  \centering
  \raisebox{-0.5\height}{\begin{tikzpicture}[scale=0.4,
    level 1/.style={sibling distance=2.0cm}
  ]
  \node {} [grow=down]
      child {node[inner,label=above right:{0}] {}
        child {node[left=11mm,inner,label={0}] {}
          child {node[leaf] {0}}
          child {node[inner,label={1}] {}
            child {node[leaf] {1}}
            child {node[leaf] {4}}
          }
        }
        child {node[right=11mm,inner,label={2}] {}
          child {node[left=3mm,inner,label={2}] {}
            child {node[inner,label={2}] {}
              child {node[leaf] {2}}
              child {node[leaf] {3}}
            }
            child {node[leaf] {6}}
          }
          child {node[right=3mm,inner,label={5}] {}
            child {node[leaf] {5}}
            child {node[leaf] {7}}
          }
        }
      };
  \end{tikzpicture}}
  $\;\rightsquigarrow\;$ 
  \raisebox{-0.5\height}{\begin{tikzpicture}[scale=0.4,
    level 1/.style={sibling distance=2.0cm}
  ]
  \node {} [grow=down]
      child {node[inner,label=above right:{2}] {}
        child {node[left=11mm,inner,label={1}] {}
          child {node[leaf] {0}}
          child {node[inner,label={4}] {}
            child {node[leaf] {1}}
            child {node[leaf] {4}}
          }
        }
        child {node[right=11mm,inner,label={5}] {}
          child {node[left=3mm,inner,label={6}] {}
            child {node[inner,label={3}] {}
              child {node[leaf] {2}}
              child {node[leaf] {3}}
            }
            child {node[leaf] {6}}
          }
          child {node[right=3mm,inner,label={7}] {}
            child {node[leaf] {5}}
            child {node[leaf] {7}}
          }
        }
      };
  \end{tikzpicture}}
  \caption{Assigning canonical internal node labels, step 2. On the right, each internal node is labeled with its clade-splitter label.
  }
  \label{fig:clade-splitter}
\end{figure}

\begin{algorithm}[H]\label{alg:canonical}
  \caption{Canonical Internal Node Labels}
  \DontPrintSemicolon
  \KwIn{binary, rooted tree $T$ with $n$ leaves, labeled $0, 1, 2, \ldots, n - 1$}
  \KwOut{$T$ with labels on internal nodes}
  \tcp{Assign clade-founder labels $\CF{v}$ to vertices}
  \For{node $v \in T$, in postorder traversal}{
    \If{$v$ is a leaf}{
      let $\CF{v} = \labelfn(v)$\;
    }{
      \Else{
        let $\CF{v} = \min(\CF{{\tt child}_1(v)}, \CF{{\tt child}_2(v)})$\;
      }
    }
  }
  \tcp{Assign clade-splitter labels $\CS{v}$ to internal vertices}
  \For{node $v \in T$, in preorder traversal}{
    \If{$v$ is not a leaf}{
      let $\CS{v} = \max(\CF{{\tt child}_1(v)}, \CF{{\tt child}_2(v)})$\;
      let $v.\labelfn = - \CS{v}$\;
    }
  }
\end{algorithm}

\begin{prop}\label{prop:canonical-label-bijection}
  Given a binary, rooted tree with leaf labels $\{0, 1, 2, \ldots, n - 1\}$, the canonical internal node labeling  (Algorithm~\ref{alg:canonical}) defines a bijective correspondence between 
  the internal nodes and the labels $\{-1, -2, \ldots, -n + 1\}$.
\end{prop}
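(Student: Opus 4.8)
The plan is to reduce the bijection claim to a single surjectivity statement via a counting argument, and then to exhibit an explicit preimage for each target label.

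First I would record the semantic content of the clade-founder labels: by induction along the postorder traversal, $\CF{v}$ equals the minimum leaf label occurring in the clade (subtree) rooted at $v$. This is immediate from the recursion $\CF{v} = \min(\CF{{\tt child}_1(v)}, \CF{{\tt child}_2(v)})$ together with the base case $\CF{v} = \labelfn(v)$ at leaves. Consequently $\CS{v} = \max(\CF{{\tt child}_1(v)}, \CF{{\tt child}_2(v)})$ is the larger of the two clade-minima of $v$'s children. Since the two child-clades are disjoint and leaf labels are distinct, these two minima are distinct nonnegative integers, so their maximum lies in $\{1, \ldots, n-1\}$; hence $v.\labelfn = -\,\CS{v} \in \{-1, \ldots, -n+1\}$, so the labeling is at least well-defined into the target set (in particular the value $0$ is never produced).

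Next I would invoke counting: a rooted binary tree with $n$ leaves has exactly $n-1$ internal nodes, and the codomain $\{-1, \ldots, -n+1\}$ also has $n-1$ elements. Because a surjection between finite sets of equal cardinality is automatically a bijection, it suffices to prove that $v \mapsto \CS{v}$ is surjective onto $\{1, \ldots, n-1\}$. For this I would construct, for each $k \in \{1, \ldots, n-1\}$, an explicit internal node with clade-splitter label $k$. Consider the leaf labeled $k$ and the path from it up to the root. Along this path the clade-minima are weakly decreasing (clades only grow going rootward), starting at $k$ at the leaf and reaching $0$ at the root, since leaf $0$ is a descendant of the root and $k \geq 1$. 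Let $v_k$ be the lowest ancestor of leaf $k$ whose clade-minimum is strictly less than $k$; this exists because the root qualifies, and $v_k$ is necessarily an internal node since leaf $k$ itself has clade-minimum $k$. Let $u$ be the child of $v_k$ on the path toward leaf $k$. Because $v_k$ is the \emph{lowest} such ancestor, $u$ still has clade-minimum equal to $k$, i.e. $\CF{u} = k$, while $v_k$'s other child has clade-minimum some value $m < k$. Therefore $\CS{v_k} = \max(k, m) = k$, proving surjectivity and hence the bijection.

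The main obstacle is more a matter of care than of difficulty: one must pin down that $v_k$ is exactly the node where the clade-minimum first drops below $k$, and verify the two facts that force $\CS{v_k} = k$, namely that the on-path child $u$ retains clade-minimum $k$ (using the minimality of $v_k$ together with the weak monotonicity of clade-minima up the path) and that the off-path child has strictly smaller minimum (which is what makes $\CF{v_k} < k$). If one prefers to verify injectivity directly rather than relying on the counting shortcut, the same path analysis does the job: if $\CS{v} = k$, then the child-clade realizing the maximum contains leaf $k$ and has minimum exactly $k$, so $v$ is an ancestor of leaf $k$ whose on-path child has clade-minimum $k$, and the monotonicity argument shows this node is unique. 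I would include this as a remark, since it also identifies $v_k$ as the node recorded by the OLA entry for leaf $k$.
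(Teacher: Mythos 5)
Your proof is correct, but it takes a genuinely different route from the paper's. The paper argues by induction on $n$: it considers a tree $T^+$ obtained from $T$ by attaching a new leaf labeled $n$, and shows (a) that all previously existing internal nodes keep their canonical labels under this attachment, and (b) that the newly created parent node receives label $-n$; bijectivity then follows from the base case $n=2$. You instead give a direct, non-inductive argument: after identifying $\CF{v}$ as the minimum leaf label in the clade below $v$, you reduce to surjectivity by counting ($n-1$ internal nodes versus $n-1$ target labels), and for each $k \in \{1,\dots,n-1\}$ you exhibit the preimage explicitly as the lowest ancestor $v_k$ of leaf $k$ whose clade-minimum drops strictly below $k$, checking via monotonicity of clade-minima along the leaf-to-root path that $\CS{v_k}=k$; your uniqueness remark closes the same argument without the counting shortcut. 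What each buys: the paper's induction dovetails with the rest of its machinery --- the stability statement (a) is exactly the kind of fact reused in the encoding bijection proof (cf.\ Proposition~\ref{prop:leaf-del-stable}) --- whereas your argument is self-contained and more informative, since it characterizes \emph{which} node carries label $-k$. In fact, your $v_k$ is precisely the upper endpoint of the monochromatic path ending at leaf $k$ in the branch decomposition of Figure~\ref{fig:branch-decomposition}; the paper offers that picture only as an informal ``intuitive explanation'' after the proposition, and your proof is essentially a rigorous version of it, which also makes transparent the connection to the OLA entry recorded when leaf $k$ is attached.
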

\begin{proof}
  We prove this by induction on $n$.
  If $n = 2$, then the unique internal node is assigned label $-1$.

  Now suppose $T$ is a tree with $n$ leaves (with labels $0, \ldots, n-1$), where $n \geq 2$, and let $T^+$ denote a tree obtained by adding a leaf $v^+$ to $T$, with $\labelfn(v^+) = n$.
  By inductive hypothesis, we assume that the statement holds for $T$.
  The internal nodes of $T^+$ may be naturally identified with internal nodes of $T$, plus one additional node in $T^+$ which is the parent of leaf $n$.
  Suppose we call this parent node $p$.
  Since the statement holds for $T$ by inductive hypothesis, it suffices to show that 
  \begin{enumerate}
    \item[(a)] the naturally identified internal nodes of $T$ and $T^+$ have the same labels;
    \item[(b)] $p$ is assigned the label $-n$ in $T^+$.
  \end{enumerate}

  For part (a), first observe that the clade-founder labels must agree on all nodes shared by $T$ and $T^+$.
  Then to argue that the clade-splitter labels also agree, it suffices to consider the parent of the node $p$ (in $T^+$).
  Call this parent node $w$.
  To compute $\CS{w}$ (in $T^+$), we consider $\CF{p}$ and the clade-founder of $w$'s other child.
  The other child is not affected by the change from $T$ to $T^+$.
  For $\CF{p}$, its value is strictly smaller than $n$.
  Thus $\CF{p}$ is equal to the clade-founder label of its non-$v^+$-child, and the clade-founder label of this child agrees between $T$ and $T^+$.
  This shows the label $\CS{w}$ in $T^+$ agrees with $\CS{w}$ in $T$ as desired.
  
  For part (b),
  according to Algorithm~\ref{alg:canonical} line 8, we have $\CS{p} \geq n$ because $v$ is a child of $p$ and $\CF{v} = n$.
  However, the reverse inequality $\CS{p} \leq n$ also holds because the clade-founder labels satisfy $\CF{*} \leq \max \{\labelfn(w) : w \}$
  This shows that $\labelfn(v) = -n$ as desired.
\end{proof}
  %
  %
  %


We can give another intuitive explanation for what we are recording with the canonical internal node labels. 
Given a leaf-ordered tree, which we imagine being constructed one leaf at a time, we can naturally ``color'' each of the edges according to the step in which it was added in the construction process.
See Figure~\ref{fig:branch-decomposition}.
This coloring decomposes the tree into a collection of single-colored paths,
each of which has one end at a leaf.
If a path ends at the leaf $i$, then we label the other end (which must be an internal node) as $-i$.
This perspective makes it clear that Proposition~\ref{prop:canonical-label-bijection} holds, once we are convinced this agrees with the labels of Algorithm~\ref{alg:canonical}.
Note: Figure~\ref{fig:branch-decomposition} was inspired by a similar figure made in Chauve et al.~\cite{chauve-etal}.

\begin{figure}[h]
  \centering
  \raisebox{-0.5\height}{\begin{tikzpicture}[scale=0.5,
    inner/.style={draw,circle,inner sep=0pt,minimum size=5pt},
    level 1/.style={level distance=1cm},
  ]
  \tikzstyle{lowlight} = []
  \tikzstyle{highlight} = []
    \node (root) {} [grow=down]
      child {node[inner,fill=color3!20] (in2) {}
        child {node[inner,fill=color2!20,left=1cm] (in1) {}
          child {node[leaf,fill=color1!20] (0) {0}}
          child {node[inner,fill=color5!20] (in4) {}
            child {node[leaf,fill=color2!20] (1) {1}}
            child {node[leaf,fill=color5!20] (4) {4}}
          }
        }
        child {node[inner,fill=color6!20,right=1.2cm] (in5) {}
          child {node[inner,fill=color7!20,left=0.3cm] (in6) {}
            child {node[inner,fill=color4!20] (in3) {}
              child {node[leaf,fill=color3!20] (2) {2}}
              child {node[leaf,fill=color4!20] (3) {3}}
            }
            child {node[leaf,fill=color7!20] (6) {6}}
          }
          child {node[inner,fill=color8!20,right=0.3cm] (in7) {}
            child {node[leaf,fill=color6!20] (5) {5}}
            child {node[leaf,fill=color8!20] (7) {7}}
          }
        }
      };
    \draw[highlightedgecolor={color1!30}] (root) -- (in2) -- (in1) -- (0);
    \draw[highlightedgecolor={color2!30}] (in1) -- (in4) -- (1);
    \draw[highlightedgecolor={color3!30}] (in2) -- (in5) -- (in6) -- (in3) -- (2);
    \draw[highlightedgecolor={color4!50}] (in3) -- (3);
    \draw[highlightedgecolor={color5!30}] (in4) -- (4);
    \draw[highlightedgecolor={color6!30}] (in5) -- (in7) -- (5);
    \draw[highlightedgecolor={color7!50}] (in6) -- (6);
    \draw[highlightedgecolor={color8!50}] (in7) -- (7);
  \end{tikzpicture}}
  \qquad\qquad
  \raisebox{-0.5\height}{\begin{tikzpicture}[scale=0.5,
    inner/.style={draw,circle,inner sep=0pt,minimum size=5pt},
    level 1/.style={level distance=1cm},
  ]
  \node {} [grow=down]
      child {node[inner,label=above right:{-2}] {}
        child {node[inner,label={-1},left=1cm] {}
          child {node[leaf] {0}}
          child {node[inner,label={-4}] {}
            child {node[leaf] {1}}
            child {node[leaf] {4}}
          }
        }
        child {node[inner,label={-5},right=1.2cm] (Pu) {}
          child {node[inner,label={-6},left=0.3cm] {}
            child {node[inner,label={-3}] {}
              child {node[leaf] {2}}
              child {node[leaf] {3}}
            }
            child {node[leaf] {6}}
          }
          child {node[inner,label={-7},right=0.3cm] (P) {}
            child {node[leaf] (Pd) {5}}
            child {node[leaf] (Q) {7}}
          }
        }
      };
  \end{tikzpicture}}
  \caption{A tree with its branch decomposition highlighted (left) and its canonical internal node labels (right).}
  \label{fig:branch-decomposition}
\end{figure}
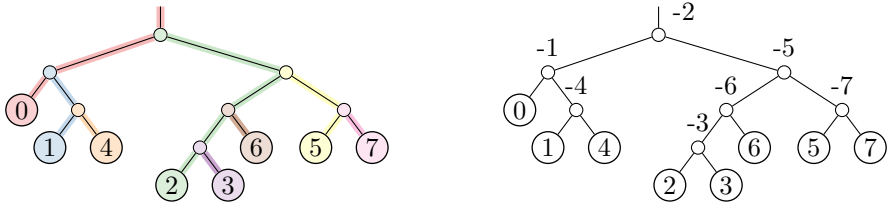


\subsection{Tree to OLA code}
\label{sec:tree-to-ola}

To produce an OLA code from a labeled binary, rooted tree, we apply the following steps:
\begin{enumerate}
  \item Apply canonical internal node labels (see Section~\ref{sec:canonical-labels}).
  \item Deconstruct the tree by removing leaves in reverse order; from leaf $n - 1$ to leaf $1$.
  \begin{enumerate}
    \item Find leaf $i$, record the label of its sister node as $a_i$.
    \item Remove leaf $i$ and its parent, connecting the leaf's sister node to its grandparent.
  \end{enumerate}
\end{enumerate}

\begin{algorithm}[H]
  \label{alg:ola-encode}
  \caption{Tree to OLA Encoding}
  \DontPrintSemicolon
  \KwIn{tree $T_n$ with $n$ leaves, labeled $0, 1, 2, \ldots, n - 1$}
  \KwOut{integer vector $(a_1, a_2, \ldots, a_{n-1}) \in \codes_{n - 1}$} 
  \BlankLine
  let $(a_1,a_2,\ldots,a_{n - 1}) = (0,0,\ldots,0)$\;
  let ${\tt leaves} = $ array with $\texttt{leaves}[i] = $ leaf with label $i$\;
  assign canonical internal node labels to $T_n$ (Algorithm~\ref{alg:canonical})\;
  \tcp{Deconstruct tree by removing leaves in reverse order,
  recording sister node of each leaf}
  \For{label $i = n - 1$ \KwTo $0$}{
    let $v = {\tt leaves}[i]$\;
    let $s = $ sister node of $v$\;
    let $a_i = s.\labelfn$\;
    \tcp{Remove leaf $v$ and its parent, after connecting sister node $s$ to its grandparent}
    let ${\tt parent}(s) = {\tt parent}({\tt parent}(v))$\;
    delete $v$, ${\tt parent}(v)$\;
  }
  \Return vector $(a_1, a_2, \ldots, a_{n - 1})$\;
\end{algorithm}

\begin{thm}
  The OLA encoding (Algorithm~\ref{alg:ola-encode}) defines a bijection 
  \[
    \Phi : \RBThat_n \to \codes_{n - 1}
  \]
  for every $n \geq 2$.
\end{thm}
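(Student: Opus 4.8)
The plan is to prove the bijection by induction on $n$, using the operation of deleting the highest-labeled leaf to reduce a tree on $n$ leaves to one on $n-1$ leaves. The base case $n = 2$ is immediate: $\RBThat_2$ has a single element (leaves $0,1$ joined beneath the root), its unique internal node receives canonical label $-1$, and the deconstruction in Algorithm~\ref{alg:ola-encode} records $a_1 = 0$, so $\Phi$ sends it to the unique element $(0)$ of $\codes_1$.

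For the inductive step, the first thing I would establish is that $\Phi$ is compatible with leaf removal. Given $T \in \RBThat_n$, let $T'$ be the tree on leaves $0, \ldots, n-2$ obtained by deleting leaf $n-1$ and its parent. I claim $\Phi(T) = (\Phi(T'), a_{n-1})$, where $a_{n-1}$ is the canonical label of the sister of leaf $n-1$ in $T$. The content here is that the canonical internal-node labels computed once on $T$ (Algorithm~\ref{alg:canonical}) restrict correctly to every intermediate tree of the deconstruction. This is exactly Proposition~\ref{prop:canonical-label-bijection}: since $n-1$ is the maximum leaf label, it never becomes a clade-founder, so all clade-founder and clade-splitter values on the nodes surviving in $T'$ agree between $T$ and $T'$. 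Consequently the entries $a_1, \ldots, a_{n-2}$ recorded while deconstructing $T$ coincide with those recorded while deconstructing $T'$, i.e.\ with $\Phi(T')$.

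The heart of the argument is then a counting bijection in the last coordinate. For a fixed $T' \in \RBThat_{n-1}$, the trees $T \in \RBThat_n$ reducing to $T'$ are exactly those obtained by attaching leaf $n-1$ as the sister of some node $v$ of $T'$, and for such a $T$ we have $a_{n-1} = v.\labelfn$ (the same whether computed in $T'$ or $T$, by the compatibility above). I would then observe that $T'$ has $n-1$ leaves and $n-2$ internal nodes, hence $2n-3$ candidate sisters, whose labels are precisely $\{0, 1, \ldots, n-2\}$ (leaf labels) together with $\{-1, -2, \ldots, -(n-2)\}$ (the internal labels, which form this set bijectively by Proposition~\ref{prop:canonical-label-bijection} applied to $T'$). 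This union is exactly the set $\{-(n-2), \ldots, n-2\}$ of admissible values under the constraint $-(n-1) < a_{n-1} < n-1$, which simultaneously shows $\Phi(T)$ lands in $\codes_{n-1}$. Thus $v \mapsto a_{n-1}$ is a bijection from attachment sites onto valid last coordinates.

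Combining these, after identifying $\codes_{n-1}$ with $\codes_{n-2} \times \{-(n-2), \ldots, n-2\}$, the fiber of $\Phi$ over any $(\mathbf{b}, c)$ is a single tree: the inductive hypothesis supplies the unique $T'$ with $\Phi(T') = \mathbf{b}$, and the last-coordinate bijection supplies the unique attachment site realizing $a_{n-1} = c$. Hence $\Phi : \RBThat_n \to \codes_{n-1}$ is a bijection. The step I expect to demand the most care is the compatibility claim — verifying that the globally computed canonical labels are stable under deletion of the top leaf, so that the recursion is genuinely well-defined — but this is already captured by Proposition~\ref{prop:canonical-label-bijection}, and the remaining cardinality bookkeeping is routine.
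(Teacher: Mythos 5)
Your proof is correct, and it shares the paper's inductive skeleton --- induction on $n$, the decomposition ``delete leaf $n-1$ / truncate the last coordinate,'' and the same two supporting facts (the canonical-label bijection of Proposition~\ref{prop:canonical-label-bijection}, and stability of the encoding under last-leaf deletion, which the paper isolates as Proposition~\ref{prop:leaf-del-stable}) --- but it closes the argument differently. The paper observes that $\RBThat_n$ and $\codes_{n-1}$ both have cardinality $(2n-3)!!$ (the classical enumeration of rooted binary trees, invoked without proof), so that only surjectivity needs checking, and it then constructs a preimage of an arbitrary vector by attaching leaf $n-1$ at the node labeled $a_{n-1}$. You never use this count: instead you show each fiber of $\Phi$ is a singleton, by matching the $2n-3$ attachment sites of a tree $T' \in \RBThat_{n-1}$ ($n-1$ leaves plus $n-2$ internal nodes) bijectively with the $2n-3$ admissible last coordinates $\{-(n-2), \ldots, n-2\}$. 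This buys self-containedness --- your argument re-derives the double-factorial count as a by-product rather than assuming it --- at the price of slightly more bookkeeping; the paper's version is shorter given the known enumeration. One small citation correction: the compatibility you rely on (canonical labels of surviving nodes agree between $T$ and $T'$, hence the entries $a_1, \ldots, a_{n-2}$ agree) is not the statement of Proposition~\ref{prop:canonical-label-bijection} but rather the content of part (a) of its proof, combined with Proposition~\ref{prop:leaf-del-stable} for the recorded entries; since you independently sketch the correct justification (leaf $n-1$, having the maximal label, never influences a clade-founder value), this is an attribution slip rather than a gap.
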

\begin{proof}
  We prove this by induction on $n$.
  For $n = 2$, there is a single rooted, binary tree which is encoded as {\tt (0)}.

  Now suppose that $n > 2$, and assume by inductive hypothesis that the OLA encoding is a bijection on trees with $n - 1$ leaves.
  Since $\RBThat_n$ and $\codes_{n - 1}$ both have cardinality equal to $(2n - 3)!!$, it suffices to show that the OLA encoding function is surjective.
  Choose an arbitrary integer vector $\mathbf{v} = (a_1, a_2, \ldots, a_{n - 2}, a_{n - 1})$ in $\codes_{n - 1}$.
  Let $\mathbf{v}_{n - 1} = (a_1, \ldots, a_{n - 2})$ denote the vector obtained by deleting the last entry.
  By our inductive hypothesis, there is a tree $T_{n - 1}$ whose OLA encoding is $\mathbf{v}_{n - 1}$.
  Using $T_{n - 1}$, we describe how to construct a tree $T$ whose OLA encoding is $\mathbf{v}$.

  Suppose we assign canonical internal node labels (Algorithm~\ref{alg:canonical}) to $T_{n - 1}$.
  Then by Proposition~\ref{prop:canonical-label-bijection}, there is a unique internal node of $T_{n - 1}$ whose canonical label is $a_{n - 1}$.
  Let $T$ be the tree obtained from $T_{n - 1}$ by subdividing the parent edge of the $a_{n - 1}$-node, and adding a new leaf under the new internal node.
  In $T$, the $a_{n - 1}$-node has the same canonical label, since the clade under this node doesn't change.
  Thus the last entry of the OLA vector $\Phi(T)$ is equal to $a_{n - 1}$, as desired.
  Finally, all the previous entries in $\Phi(T)$ agree with $\mathbf{v}$ by construction, due to Proposition~\ref{prop:leaf-del-stable}.
\end{proof}

In the following statement, if $T$ is a leaf-ordered tree with $n$ nodes, the notation $T|_k$ denotes the tree obtained by pruning all nodes from $T$ that are larger than $k$, and ``collapsing'' nodes which have only one child in the result.

\begin{prop}\label{prop:leaf-del-stable}
  The OLA encoding (Algorithm~\ref{alg:ola-encode}) is stable under last-leaf-deletion.
  In other words, if $T$ has OLA vector ${\tt (a_0,a_1,\ldots,a_{n - 1})}$, then $T|_k$ has OLA vector ${\tt (a_0,a_1,\ldots,a_{k - 1})}$ for any $k < n$.
\end{prop}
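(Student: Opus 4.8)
The plan is to prove the statement first in the single-leaf case $k = n-1$ and then obtain the general statement by iteration. So I would begin by observing that pruning all leaves larger than $k$ and collapsing is the same as repeatedly deleting the current maximal leaf and collapsing; hence if I can show that deleting only leaf $n-1$ truncates exactly the final OLA entry, the case of general $k < n$ follows by applying this fact $n-k$ times. This reduces the proposition to the single-leaf case.

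For that case, the key observation is that Algorithm~\ref{alg:ola-encode}, run on $T$, deconstructs the tree by removing leaves in decreasing order of label. Thus its first loop iteration (for $i = n-1$) records $a_{n-1}$ as the canonical label of the sister of leaf $n-1$, and then deletes leaf $n-1$ together with its parent, reconnecting the sister to the grandparent. By definition the resulting tree is exactly $T|_{n-1}$, since this deletion-and-reconnection is precisely the ``collapse'' in the definition of $T|_{n-1}$. All subsequent iterations (for $i = n-2, \ldots, 1$) then operate on $T|_{n-1}$ and produce the entries $a_1, \ldots, a_{n-2}$. Hence it suffices to show that these are the same entries Algorithm~\ref{alg:ola-encode} produces when run from scratch on $T|_{n-1}$.

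The only possible discrepancy between ``continuing the run on $T$'' and ``starting a fresh run on $T|_{n-1}$'' is the set of canonical internal node labels being used: the continued run uses the labels that Algorithm~\ref{alg:canonical} assigned to $T$, restricted to the surviving nodes, whereas the fresh run uses the labels that Algorithm~\ref{alg:canonical} assigns directly to $T|_{n-1}$. I would therefore reduce the whole proposition to the claim that these two labelings agree on every internal node of $T|_{n-1}$. Granting this, the two runs perform identical sister-lookups and identical deletions at every step, so they record identical entries, which proves the single-leaf case and, by the iteration above, the general statement.

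The hard part is exactly this label-stability, but it is already established: it is precisely part~(a) in the proof of Proposition~\ref{prop:canonical-label-bijection}, which shows that attaching a leaf with the maximal label to a tree leaves the canonical labels of all pre-existing internal nodes unchanged. Reading that statement in reverse, with $T|_{n-1}$ playing the role of the smaller tree and $T$ the tree obtained by attaching leaf $n-1$, yields exactly that the canonical labels of the surviving internal nodes coincide in $T$ and $T|_{n-1}$. With the crux thus inherited from Proposition~\ref{prop:canonical-label-bijection}, the remaining work is only bookkeeping: confirming that the deletion step of Algorithm~\ref{alg:ola-encode} matches the collapse defining $T|_{n-1}$, and that iterating the single-leaf case reproduces the general $T|_k$.
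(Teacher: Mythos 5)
Your proposal is correct and is essentially the paper's own argument: the paper's entire proof is the single sentence that the claim ``is clear from the for-loop in lines 4--9 of Algorithm~\ref{alg:ola-encode},'' which is precisely your core observation that the deconstruction loop removes leaves in decreasing label order, so the run of the algorithm on $T$ contains the run on the truncated tree as its tail. Your write-up additionally makes explicit the one ingredient that this one-liner glosses over---that the canonical internal labels of $T$, restricted to the surviving nodes, coincide with the labels computed afresh on the truncated tree, which you correctly source from part~(a) of the proof of Proposition~\ref{prop:canonical-label-bijection}---so it is a more careful rendering of the same approach rather than a different one.
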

\begin{proof}
  This is clear from the for-loop in lines 4-9 of Algorithm~\ref{alg:ola-encode}.
\end{proof}

\subsection{OLA code to tree}
\label{sec:ola-to-tree}

To produce a tree from an OLA code, apply the following steps.
Suppose we are given an OLA code $(a_0, a_1, \ldots, a_{n - 1})$.

\begin{enumerate}
  \item Start with a rooted tree $T_0$ having a single leaf, labeled $0$. 
  \item For each $i = 1, \ldots, n$, construct $T_{i}$ from $T_{i-1}$ as follows.
  \begin{enumerate}
    \item Suppose $i$-th entry is $a_i$. Find the node $n_i$ in tree $T_{i-1}$ labeled $a_i$.
    \item Subdivide the parent-edge of $n_i$ with a new internal node. Give this new node the label $-i$.
    \item Add a leaf node under $-i$, with label $i$. Call this resulting tree $T_i$.
  \end{enumerate}
  \item Return the final tree $T_n$ (forgetting internal labels)
\end{enumerate}

\begin{algorithm}[H]
  \label{alg:ola-decode}
  \caption{OLA to Tree Decoding Algorithm}
  \DontPrintSemicolon
  \KwIn{integer vector $(a_1, a_2, \ldots, a_{n - 1}) \in \codes_{n - 1}$}
  \KwOut{tree $T_n$ with $n$ leaves}
  \BlankLine
  let $T_0 = $ a rooted tree with a root node $\rho$ and a leaf $v_0$ labeled $0$\;
  let ${\tt nodes} = $ array of length $2n - 1$\;
  let {\tt nodes[0]} $ = v_0$\;
  \For{$i = 1$ \KwTo $n - 1$}{
    let $T_i = T_{i - 1}$\; 
    let $v = {\tt nodes}[a_i]$\;
    subdivide the parent-edge of $v$ with a new internal node $w$\;
    add new leaf node $u$ as child of $w$\;
    let ${\tt nodes}[-i] = w$, and ${\tt nodes}[i] = u$\;
  }
  \Return $T_{n - 1}$\; 
\end{algorithm}

\OLAEncodingDecoding

\begin{proof}
  We prove the claim by induction on $n$.
  For $n = 2$, both $\RBThat_2$ and $\codes_1$ contain a single element, and it is straightforward to check that we get bijections between them.

  Now suppose $n \geq 3$.
  For an integer vector $\mathbf{v} = (a_1, a_2, \ldots, a_{n - 1})$, we first claim that the last entry of $\Phi(\Psi(\mathbf{v}))$ is equal to $a_{n - 1}$.
  In the tree $\Phi(\mathbf{v})$, by construction (Algorithm~\ref{alg:ola-decode} lines 6-9) a neighborhood of leaf $n - 1$ looks like the figure below.
  \[
    \begin{tikzpicture}[scale=0.6,
      level 1/.style={level distance=1cm},
    ]
    \node {$\vdots$} [grow=down,]
      child {node[inner,label=above right:{-$n$+1}] {}
        child {node[inner,label=above left:{$a_{n - 1}$}] {}
            child {node[left=-2.0mm] {$\iddots$}}
          child {node[right=-1.0mm] {$\ddots$}}
        }
        child {node[leaf] {\small $n$-1}}
      };
    \end{tikzpicture}
  \]
  Then by construction of the OLA encoding, the last entry of $\Phi(\Psi(\mathbf{v}))$ only depends on this local picture, and results in last entry $a_{n - 1}$, as claimed.

  To confirm that the remaining entries $a_1, \ldots, a_{n - 2}$ of $\mathbf{v}$ agree with $\Phi(\Psi(\mathbf{v}))$, we apply our inductive hypothesis, along with the property that $\Phi$ is stable under last-leaf-deletion, Proposition~\ref{prop:leaf-del-stable}.
\end{proof}

\subsection{Running time}

We show that converting a tree to an OLA code, and vice versa, takes linear time in the number of leaves.

\begin{thm}\label{thm:can-label-linear}
  The process of assigning canonical internal node labels to a tree with $n$ leaves has time complexity $O(n)$.
\end{thm}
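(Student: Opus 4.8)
The plan is to show that Algorithm~\ref{alg:canonical} performs a constant amount of work at each of $O(n)$ nodes, and then sum. First I would pin down the total number of nodes in the input tree. Since the tree is binary and rooted with $n$ leaves, every internal node has exactly two children, so a standard edge-counting (equivalently, the observation that each of the leaves $1, \ldots, n-1$ creates exactly one internal node when attached) gives exactly $n - 1$ internal nodes, hence $2n - 1$ nodes in total, which is $O(n)$.

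Next I would analyze the two loops of Algorithm~\ref{alg:canonical} separately. The first loop is a single postorder traversal that visits every node exactly once; at each node it either reads a leaf label or computes $\min(\CF{{\tt child}_1(v)}, \CF{{\tt child}_2(v)})$, followed by a single assignment to $\CF{v}$. The second loop is a single preorder traversal, again visiting every node exactly once, where at each internal node it computes $\max(\CF{{\tt child}_1(v)}, \CF{{\tt child}_2(v)})$ from already-computed clade-founder values, assigns $\CS{v}$, and sets $v.\labelfn = -\CS{v}$. In both cases the per-node work is a constant number of comparisons, arithmetic operations, and pointer dereferences, i.e.\ $O(1)$.

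To conclude, I would invoke the standard fact that a postorder or preorder traversal of a tree runs in time linear in the number of nodes (via a recursive or explicit stack-based traversal with $O(1)$ child access), so each loop costs $O(2n - 1) = O(n)$. Summing the two loops yields total time $O(n)$, as claimed.

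I do not expect a genuine obstacle here, since the argument is a direct node-count combined with the constant per-node cost. The only point deserving an explicit word of care is the implicit data-structure assumption: I would state that the tree is stored so that a node's children (and, for the traversals, the enumeration of nodes in postorder and preorder) are accessible in the usual linear-time manner. Under the standard child-pointer representation this holds, so I would note it briefly rather than belabor it.
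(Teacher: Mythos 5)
Your proposal is correct and follows essentially the same argument as the paper, which simply notes that Algorithm~\ref{alg:canonical} consists of two tree traversals with constant-time work per step; your write-up is a fleshed-out version of that one-sentence proof, adding the explicit node count and the data-structure caveat. (One trivial detail: with the paper's convention that the root $\rho$ has a single child, the tree has $2n$ nodes rather than $2n-1$, but this changes nothing about the $O(n)$ bound.)
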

\begin{proof}
  Algorithm~\ref{alg:canonical} consists of two traversals of the tree, and each step of the traversal requires a constant-time computation.
\end{proof}

\begin{thm}\label{thm:ola-linear}
  \hfill
  \begin{enumerate}[(a)]
    \item
    The process of encoding a tree with $n$ leaves to its OLA code has time complexity $O(n)$.

    \item
    The process of decoding an OLA code of length $n - 1$ to a tree has time complexity $O(n)$.

  \end{enumerate}
\end{thm}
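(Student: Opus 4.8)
The plan is to observe that each of the two algorithms consists of an $O(n)$-time setup phase followed by a main loop that executes $O(n)$ times, and then to check line by line that every operation inside the loop runs in constant time under a suitable pointer-based representation of the tree. First I would fix the data structure: each node stores a pointer to its parent and to its (at most two) children together with its integer label, and we maintain an array, indexed by label, giving $O(1)$ access to the node carrying a given label. Since internal labels are negative and leaf labels nonnegative, this array is indexed by the signed integers in $\{-(n-1), \ldots, n-1\}$; realizing it as a length-$(2n-1)$ array with a fixed offset makes each lookup and update $O(1)$.

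For part (a), I would first note that the initialization in Algorithm~\ref{alg:ola-encode} (allocating the output vector and the ${\tt leaves}$ array) is $O(n)$, and that assigning canonical internal node labels is $O(n)$ by Theorem~\ref{thm:can-label-linear}. The main loop runs $n$ times. Within each iteration, retrieving ${\tt leaves}[i]$ is an $O(1)$ array access; finding the sister node $s$ of $v$ is $O(1)$, since $s$ is simply the other child of $v$'s parent, reachable through the parent pointer; recording $s.\labelfn$ is $O(1)$; and the splice step---reconnecting $s$ to the grandparent of $v$ and deleting $v$ together with its parent---touches only a constant number of pointers. Summing the constant per-iteration cost over the $n$ iterations and adding the $O(n)$ setup yields total time $O(n)$.

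For part (b), the setup in Algorithm~\ref{alg:ola-decode} (building $T_0$ and the length-$(2n-1)$ ${\tt nodes}$ array) is $O(n)$, and the main loop runs $n-1$ times. Each iteration looks up ${\tt nodes}[a_i]$ in $O(1)$, subdivides the parent-edge of that node by inserting one new internal node $w$ (a constant number of pointer updates, via the parent pointer), attaches a new leaf $u$ under $w$ (constant), and writes the two array entries ${\tt nodes}[-i]$ and ${\tt nodes}[i]$ in $O(1)$. Hence the loop is $O(n)$, and with setup the entire decoding is $O(n)$.

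The one thing to verify carefully---and the only place the bound could secretly fail---is that every primitive tree operation is genuinely constant time under the chosen representation. The most delicate points are the label-indexed lookups when the index can be negative (handled by the offset array) and the edge-subdivision and splice operations, which must touch only a bounded number of pointers rather than, say, traversing a subtree or recomputing labels. Once the representation is pinned down so that each of these is manifestly $O(1)$, both running-time bounds follow immediately from the loop-counting argument above.
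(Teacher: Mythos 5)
Your proposal is correct and takes essentially the same route as the paper's own proof: both decompose each algorithm into an $O(n)$ setup (invoking Theorem~\ref{thm:can-label-linear} for the canonical-labeling phase) plus a main loop with constant work per iteration, summed over $O(n)$ iterations. If anything, your version is more careful than the paper's, which leaves the pointer/array representation implicit and even contains an apparent slip (asserting the per-iteration sister-lookup and tree-modification steps take ``linear time'' where constant time is clearly intended, since otherwise the loop-counting argument would only give a quadratic bound); your explicit handling of the offset-indexed label array and the constant-number-of-pointer-updates splice closes exactly that gap.
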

\begin{proof}
  \begin{enumerate}[(a)]
    \item The first step of the OLA encoding is assigning canonical internal node labels, which takes linear time (Theorem~\ref{thm:can-label-linear}).

    The next step iterates through the leaves of the tree, in reverse order from their given ordering (i.e. largest-to-smallest).
    This can be achieved efficiently by constructing a label-to-leaf array in a single tree traversal.
    Finding the sister node (line 6) takes linear time, as does the tree modification operation (lines 8-9).

    \item The OLA decoding (Algorithm~\ref{alg:ola-decode}) consists of initializing a linear-length array and a single linear-length loop, so it suffices to explain why each interation of the loop is constant-time.
    This is clear because the tree modification operations (lines 7-8) take linear time, as does updating the {\tt nodes} array (line 9).
    Copying the previous tree in the loop (line 5) can be done by reference.
  \end{enumerate}
\end{proof}

\subsection{Code details}

We have implemented the linear-time OLA encoding and decoding algorithms in Python, using the ETE3 library~\cite{ete3} and minimal other dependencies.
The package is available at \url{https://github.com/matsengrp/ola-encoding}.

\subsection*{Acknowledgements}

We thank Matthew Penn and Samir Bhatt for helpful conversations.
Our implementation of the OLA encoding and decoding algorithms use the ETE3~\cite{ete3} library.

This work was supported through US National Institutes of Health grant AI162611.
Scientific Computing Infrastructure at Fred Hutch was funded by ORIP grant S10OD028685.
Dr.\ Matsen is an Investigator of the Howard Hughes Medical Institute.

AI-assistance disclosure: ChatGPT was used to help produce a template for  typesetting pseudocode using the \texttt{algorithm2e} package.

\bibliographystyle{plain}
\bibliography{main-ref}


\clearpage

\appendix

\begin{center}
  \textbf{\huge Appendix}
\end{center}

\section{OLA code for trees with branch lengths}

The OLA encoding can be easily extended to encode phylogenetic trees with branch lengths.
This can be achieved by adding two additional rows underneath the original OLA code, with entries consisting of positive real numbers.
These real numbers prescribe the lengths of the two new branches that form when a new leaf is attached.

Rather than spell out all the details here, we illustrate a possible branch-length OLA encoding with an example.
Figure~\ref{fig:branch-lengths} shows a tree with branch lengths, as it is constructed one leaf at a time.
This tree is encoded by the array
\[
  \begin{pmatrix}
    \texttt{0} & \texttt{-1} & \mathtt{1} & \mathtt{2} \\
    \tikzmark{1up} 0.1 & \tikzmark{2up} 0.6 & \tikzmark{3up} 0.2 & \tikzmark{4up} 0.4 \\
    0.7 \tikzmark{1down} & 0.3 \tikzmark{2down} & 0.1 \tikzmark{3down} & 0.9 \tikzmark{4down}
  \end{pmatrix},
  \DrawBox[color1!80]{1up}{1down}
  \DrawBox[color2!80]{2up}{2down}
  \DrawBox[color3!80]{3up}{3down}
  \DrawBox[color4!80]{4up}{4down}
\]
where the colors highlight which lengths correspond to which branches.
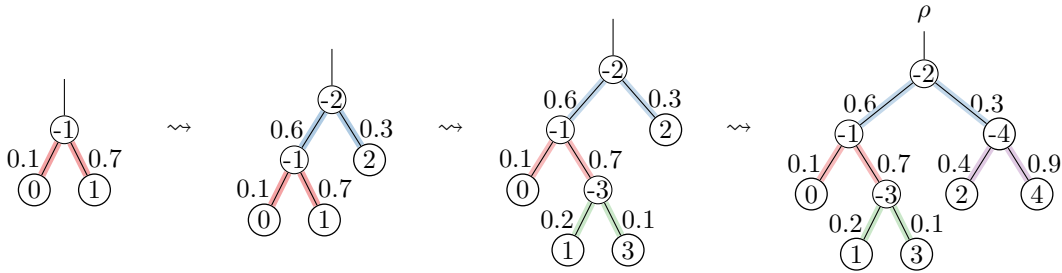
\begin{figure}[h]
  \centering
  \makebox[\textwidth][c]{

  \raisebox{-0.5\height}{\begin{tikzpicture}[scale=0.4,
    level distance=2cm,
    level 1/.style={sibling distance=2cm},
  ]
    \node {} [grow=down]
      child {node[inner,label=above right:{}] (in1) {-1}
        child {node[leaf] (0) {0} edge from parent node[left,draw=none] {0.1}}
        child {node[leaf] (1) {1} edge from parent node[right,draw=none] {0.7}}
      };
    \draw[highlightedgecolor={color1!40}] (0) -- (in1) -- (1);
  \end{tikzpicture}}
  $\quad\rightsquigarrow\quad$
  \raisebox{-0.5\height}{\begin{tikzpicture}[scale=0.4,
    level distance=2cm,
    level 2/.style={sibling distance=2.5cm},
    level 3/.style={sibling distance=2cm},
    level 4/.style={sibling distance=2cm},
  ]
    \node {} [grow=down]
      child {node[inner,minimum size=7pt,label=above right:{}] (in2) {-2}
        child {node[inner,minimum size=7pt,label={}] (in1) {-1}
          child {node[leaf] (0) {0} edge from parent node[left,draw=none] {0.1}}
          child {node[leaf] (1) {1} edge from parent node[right,draw=none] {0.7}}
          edge from parent node[left,draw=none] {0.6}
        }
        child {node[leaf] (2) {2} edge from parent node[right,draw=none] {0.3}}
      };
    \draw[highlightedgecolor={color1!40}] (0) -- (in1) -- (1);
    \draw[highlightedgecolor={color2!40}] (in1) -- (in2) -- (2);
  \end{tikzpicture}}
  \quad$\rightsquigarrow$\quad
  \raisebox{-0.5\height}{\begin{tikzpicture}[scale=0.4,
    level distance=2cm,
    level 2/.style={sibling distance=3.5cm},
    level 3/.style={sibling distance=2.5cm},
    level 4/.style={sibling distance=2cm},
  ]
  \node {} [grow=down]
      child {node[inner,label=above right:{}] (in2) {-2}
        child {node[inner,label={}] (in1) {-1}
          child {node[leaf] (0) {0} edge from parent node[left,draw=none] {0.1}}
          child {node[inner,label={}] (in3) {-3}
            child {node[leaf] (1) {1} edge from parent node[left,draw=none] {0.2}}
            child {node[leaf] (3) {3} edge from parent node[right,draw=none] {0.1}}
            edge from parent node[right,draw=none] {0.7}
          }
          edge from parent node[left,draw=none] {0.6}
        }
        child {node[leaf] (2) {2} edge from parent node[right,draw=none] {0.3}}
      };
    \draw[highlightedgecolor={color1!30}] (0) -- (in1) -- (in3);
    \draw[highlightedgecolor={color2!30}] (in1) -- (in2) -- (2);
    \draw[highlightedgecolor={color3!30}] (1) -- (in3) -- (3);
  \end{tikzpicture}}
  \quad$\rightsquigarrow$\quad
  \raisebox{-0.5\height}{\begin{tikzpicture}[scale=0.4,
    level distance=2cm,
    level 2/.style={sibling distance=5.0cm},
    level 3/.style={sibling distance=2.5cm},
    level 4/.style={sibling distance=2cm},
  ]
  \node {$\rho$} [grow=down]
      child {node[inner,label=above right:{}] (in2) {-2}
        child {node[inner,label={}] (in1) {-1}
          child {node[leaf] (0) {0} edge from parent node[left,draw=none] {0.1}}
          child {node[inner,label={}] (in3) {-3}
            child {node[leaf] (1) {1} edge from parent node[left,draw=none] {0.2}}
            child {node[leaf] (3) {3} edge from parent node[right,draw=none] {0.1}}
            edge from parent node[right,draw=none] {0.7}
          }
          edge from parent node[left,draw=none] {0.6}
        }
        child {node[leaf] (in4) {-4} 
          child {node[leaf] (2) {2} edge from parent node[left,draw=none] {0.4}}
          child {node[leaf] (4) {4} edge from parent node[right,draw=none] {0.9}}
          edge from parent node[right,draw=none] {0.3}}
      };
    \draw[highlightedgecolor={color1!30}] (0) -- (in1) -- (in3);
    \draw[highlightedgecolor={color2!30}] (in1) -- (in2) -- (in4);
    \draw[highlightedgecolor={color3!30}] (1) -- (in3) -- (3);
    \draw[highlightedgecolor={color4!30}] (2) -- (in4) -- (4);
  \end{tikzpicture}}
  }
  \caption{OLA encoding of a binary tree with branch lengths.}
  \label{fig:branch-lengths}
\end{figure}
Alternatively, the tree with branch lengths could be encoded in ``one-line format'' as \texttt{(0:\colorbox{color1!20}{0.1:0.7}, -1:\colorbox{color2!20}{0.6:0.3}, 1:\colorbox{color3!20}{0.2:0.1}, 2:\colorbox{color4!20}{0.4:0.9})}

\section{Example OLA encoding}\label{sec:ola-ex-large}

We demonstrate the construction with a more involved example of a phylogenetic tree, compared to the example in the introduction,
shown in Figure~\ref{fig:ola-ex-larger}.

\begin{figure}[h]
  \centering
  \raisebox{-0.5\height}{\begin{tikzpicture}[scale=0.6,
  ]
    \node[tnode] {$\rho$} [grow=down]
      child {node[leaf] {0}};
  \end{tikzpicture}}
  $\quad\rightsquigarrow\quad$
  \raisebox{-0.5\height}{\begin{tikzpicture}[scale=0.5,
    level 1/.style={level distance=1cm},
  ]
    \node {$\rho$} [grow=down]
      child {node[lowlight,inner,label=above right:{-1}] (P) {}
        child {node[leaf,highlight] {0}}
        child {node[leaf,lowlight] (Q) {1}}
      };
    \draw[lowlightedge] (P) -- (Q);
    \draw (P) -- (Q);
  \end{tikzpicture}}
  $\quad\rightsquigarrow\quad$
  \raisebox{-0.5\height}{\begin{tikzpicture}[scale=0.5,
    level 1/.style={level distance=1cm},
  ]
    \node {$\rho$} [grow=down]
      child {node[lowlight,inner,label=above right:{-2}] (P) {}
        child {node[highlight,inner,label={-1}] {}
          child {node[leaf] {0}}
          child {node[leaf] {1}}
        }
        child {node[leaf,lowlight] (Q) {2}}
      };
    \draw[lowlightedge] (P) -- (Q);
    \draw (P) -- (Q);
  \end{tikzpicture}}
  $\quad\rightsquigarrow\quad$
  \raisebox{-0.5\height}{\begin{tikzpicture}[scale=0.5,
    level 1/.style={level distance=1cm},
  ]
    \node {$\rho$} [grow=down]
      child {node[inner,label=above right:{-2}] {}
        child {node[inner,label={-1},left=0.4cm] {}
          child {node[leaf] {0}}
          child {node[leaf] {1}}
        }
        child {node[tnode,lowlight,inner,label={-3},right=0.4cm] (P) {}
          child {node[leaf,highlight] {2}}
          child {node[leaf,lowlight] (Q) {3}}
        }
      };
    \draw[lowlightedge] (P) -- (Q);
    \draw (P) -- (Q);
  \end{tikzpicture}}
  $\quad\rightsquigarrow\quad$\\[0.6cm]
  \raisebox{-0.5\height}{\begin{tikzpicture}[scale=0.5,
    level 1/.style={level distance=1cm},
  ]
  \node {$\rho$} [grow=down]
      child {node[inner,label=above right:{-2}] {}
        child {node[inner,label={-1},left=0.8cm] {}
          child {node[leaf] {0}}
          child {node[inner,lowlight,label={-4}] (P) {}
            child {node[leaf,highlight] {1}}
            child {node[leaf,lowlight] (Q) {4}}
          }
        }
        child {node[inner,label={-3},right=0.5cm] {}
          child {node[leaf] {2}}
          child {node[leaf] {3}}
        }
      };
    \draw[lowlightedge] (P) -- (Q);
    \draw (P) -- (Q);
  \end{tikzpicture}}
  $\quad\rightsquigarrow\quad$
  \raisebox{-0.5\height}{\begin{tikzpicture}[scale=0.5,
    level 1/.style={level distance=1cm},
  ]
  \node {$\rho$} [grow=down]
      child {node[inner,label=above right:{-2}] (Pu) {}
        child {node[inner,label={-1},left=0.8cm] {}
          child {node[leaf] {0}}
          child {node[inner,label={-4}] {}
            child {node[leaf] {1}}
            child {node[leaf] {4}}
          }
        }
        child {node[lowlight,inner,label={-5},right=0.8cm] (P) {}
          child {node[highlight,inner,label={-3}] (Pd) {}
            child {node[leaf] {2}}
            child {node[leaf] {3}}
          }
          child {node[leaf,lowlight] (Q) {5}}
        }
      };
  \end{tikzpicture}}
  $\quad\rightsquigarrow\quad$\\[0.6cm]
  \raisebox{-0.5\height}{\begin{tikzpicture}[scale=0.5,
    level 1/.style={level distance=1cm},
  ]
  \node {$\rho$} [grow=down]
      child {node[inner,label=above right:{-2}] {}
        child {node[inner,label={-1},left=1cm] {}
          child {node[leaf] {0}}
          child {node[inner,label={-4}] {}
            child {node[leaf] {1}}
            child {node[leaf] {4}}
          }
        }
        child {node[inner,label={-5},right=1.0cm] (Pu) {}
          child {node[lowlight,inner,label={-6}] (P) {}
            child {node[highlight,inner,label={-3}] (Pd) {}
              child {node[leaf] {2}}
              child {node[leaf] {3}}
            }
            child {node[leaf,lowlight] (Q) {6}}
          }
          child {node[leaf] {5}}
        }
      };
  \end{tikzpicture}}
  $\quad\rightsquigarrow\quad$
  \raisebox{-0.5\height}{\begin{tikzpicture}[scale=0.5,
    level 1/.style={level distance=1cm},
  ]
  \node {$\rho$} [grow=down]
      child {node[inner,label=above right:{-2}] {}
        child {node[inner,label={-1},left=1cm] {}
          child {node[leaf] {0}}
          child {node[inner,label={-4}] {}
            child {node[leaf] {1}}
            child {node[leaf] {4}}
          }
        }
        child {node[inner,label={-5},right=1.2cm] (Pu) {}
          child {node[inner,label={-6},left=0.3cm] {}
            child {node[inner,label={-3}] {}
              child {node[leaf] {2}}
              child {node[leaf] {3}}
            }
            child {node[leaf] {6}}
          }
          child {node[lowlight,inner,label={-7},right=0.3cm] (P) {}
            child {node[leaf,highlight] (Pd) {5}}
            child {node[leaf,lowlight] (Q) {7}}
          }
        }
      };
  \end{tikzpicture}}
  \caption{Constructing the tree with OLA code $\texttt{(0,-1,2,1,-3,-3,5)}$.}
  \label{fig:ola-ex-larger}
\end{figure}

\section{Tree distance and random walks}

\subsection{Generating random trees}

A common problem in phylogenetics is to efficiently traverse the space of all (binary) phylogenetic trees on a given number of leaves.

The OLA code allows us to define a random walk on the set of rooted, binary trees $\RBT_n$ by applying the OLA decoding function to a Gibbs random walk on the set of vectors $\codes_{n - 1}$.
By ``Gibbs random walk,'' we mean one step in the random walk consists of choosing a random index $i$, and changing the entry at $i$ to any valid integer uniformly, so that the outcome lies in $\codes_{n - 1}$.
It is natural to ask: how quickly do we expect this random walk to become well-dispersed over tree space?


\begin{thm}
  Let $t^*$ denote the number of steps in a random OLA walk after which the resulting probability distribution on $\RBT_n$ is uniform.
  Then $\EE(t^*) = \Theta(n \log n)$.
\end{thm}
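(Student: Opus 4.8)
The plan is to exploit the fact that the constraint set is a \emph{product} space. Since the inequality $-i < a_i < i$ constrains each coordinate $a_i$ in isolation, we have $\codes_{n-1} = \prod_{i=1}^{n-1} \{-(i-1), \ldots, i-1\}$, where the $i$-th factor has $2i - 1$ values. Under this identification, one step of the Gibbs walk selects an index $i$ uniformly and replaces $a_i$ by a fresh value drawn uniformly from $\{-(i-1), \ldots, i-1\}$, independently of everything else (validity of $a_i$ does not involve the other coordinates). Its stationary distribution is therefore the product of uniforms, and the decoding bijection $\Psi : \codes_{n-1} \to \RBThat_n$ carries this to the uniform distribution on trees. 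So it suffices to determine when the coordinate-resampling chain reaches exact product-uniformity.

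The central step is to exhibit a \emph{strong stationary time}. I would let $t^*$ be the first step by which every coordinate has been selected, and hence resampled, at least once. I claim that at the random time $t^*$ the walk is \emph{exactly} product-uniform. To see this, condition on the entire sequence of selected indices; this sequence alone determines $t^*$. Given the sequence, the value of each coordinate at time $t^*$ equals the outcome of its most recent resampling, and these outcomes are mutually independent, each uniform on its factor. Hence the conditional law of the state at time $t^*$ is the product of uniforms and does not depend on $t^*$; averaging over index sequences shows that the state at the random time $t^*$ is exactly uniform on $\codes_{n-1}$ and independent of $t^*$. Transporting through $\Psi$ yields the uniform distribution on $\RBT_n$.

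It then remains to compute $\EE(t^*)$. By construction $t^*$ is precisely the coupon-collector time for $n-1$ coupons drawn uniformly with replacement (the $i = 1$ coordinate is forced to be $0$ and is trivially uniform, which changes the count of nontrivial coupons by one but not the asymptotics). The standard computation gives $\EE(t^*) = (n-1)\,H_{n-1}$ with $H_{n-1} = \sum_{k=1}^{n-1} 1/k$, and the bounds $\ln(n-1) \le H_{n-1} \le 1 + \ln(n-1)$ deliver both $\EE(t^*) = O(n \log n)$ and $\EE(t^*) = \Omega(n \log n)$, so $\EE(t^*) = \Theta(n \log n)$.

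The main obstacle I anticipate is the exactness claim in the strong-stationary-time step: one must argue not merely that a coordinate is marginally uniform once resampled, but that the \emph{joint} law at the random time $t^*$ is the product of uniforms \emph{and} independent of $t^*$. The clean route is the conditioning-on-the-index-sequence argument above, which decouples the random stopping rule from the independent resampled values; the crux is verifying that $t^*$ is measurable with respect to the index sequence and that the last-resample values are conditionally independent uniforms. A secondary, minor point is to pin down what ``the distribution is uniform'' means here, namely exact uniformity at the randomized strong stationary time $t^*$ rather than convergence of the time-$t$ marginals, and to confirm that this is indeed the quantity whose expectation is asserted to be $\Theta(n \log n)$.
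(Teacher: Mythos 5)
Your proposal is correct and takes essentially the same approach as the paper: the paper's proof also declares the walk mixed exactly when every (nontrivial) coordinate has been resampled at least once, and reduces the expectation of that time to the standard coupon-collector bound $\Theta(n\log n)$. Your conditioning-on-the-index-sequence argument simply makes rigorous the strong-stationary-time claim that the paper asserts informally (``the state of the walk is independent of the starting point exactly when every entry between $a_2$ and $a_{n-1}$ has been chosen to be mixed''), so it is a welcome tightening rather than a different route.
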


\begin{proof}
  This is a direct application of the expected time of the coupon collection problem~\cite[Chapter 3.6]{motwani-raghavan}.

  The OLA code consists of $n - 1$ entries, and each step in an OLA walk (Gibbs walk) consists of choosing one entry and changing the value of that entry to a uniformly-random chosen value (out of the set of valid values, $-i < a_i < i$).
  The state of the walk is independent of the starting point exactly when every entry between $a_2$ and $a_{n - 1}$ has been chosen to be ``mixed.''
  (We ignore $a_1$, since this entry is always $0$.)
  This situation is equivalent to the coupon collector problem: suppose there are $n - 2$ coupon types, and one wishes to collect all types.
  Each coupon has a uniformly-random type.
  Then a standard result of probability theory states that the expected number of coupons needed is $\Theta((n - 2) \log(n - 2))$, which is asymptotically equivalent to $n \log n$.
\end{proof}

Diaconis and Holmes~\cite{diaconis-holmes-walks} consider a different method for randomly perturbing binary trees that also mixes in $n \log n$ time, up to a constant factor.
Earlier, Aldous~\cite{aldous} considers randomly perturbing binary trees by pruning and regrafting a single leaf at each step.
He shows that this process mixes more slowly, in polynomial time between $O(n^2)$ and $O(n^3)$.

\subsection{OLA distance along random SPR walks and NNI walks}

In Section~\ref{sec:ola-distance-compare}, we investigated bounds on the OLA distance of two trees which differ by a single NNI move or a single SPR move.
In general, we may also want to consider phylogenetic trees that are farther apart in NNI- or SPR-distance.

In Figures~\ref{fig:spr-walk-dist-100} and \ref{fig:nni-walk-dist-100}, we consider running an SPR-random-walk (respectively, NNI-random-walk) in which a tree is repeatedly perturbed by a uniformly selected random SPR or NNI move.
We can observe in the figures that the OLA distance increases approximately linearly along the walk until it levels off after sufficiently many moves.
This leveling-off likely corresponds to a leveling-off of the SPR distance as well, in which a collection of random SPR moves can be compressed into a smaller number of SPR moves.
For example, once the random walk reaches the SPR diameter of tree space, the SPR distance cannot increase any further.

Compare to~\cite[Figure 5]{penn-etal} for the Phylo2Vec distance.

\begin{figure}[h]
  \centering
  \includegraphics*[scale=0.8]{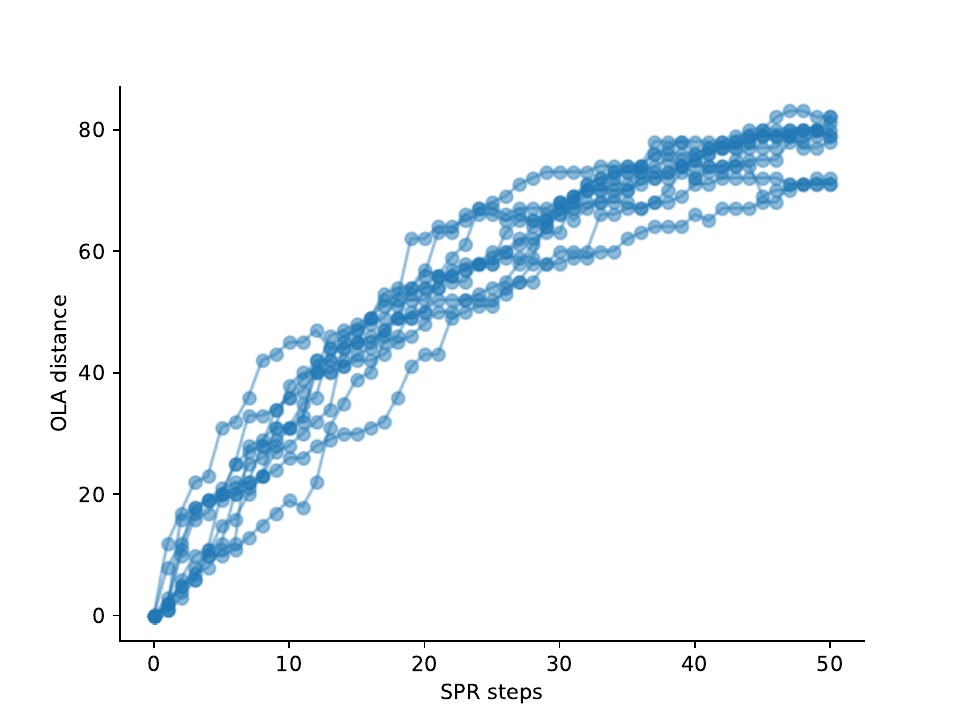}
  \caption{Comparing OLA distance for random SPR walks on trees with $100$ leaves.
  Here each walk consists of choosing a random starting tree and applying a sequence of $50$ random SPR moves.
  The plot shows the OLA distance between the starting tree and each tree along the walk, for $10$ SPR walks.}
  \label{fig:spr-walk-dist-100}
\end{figure}

\begin{figure}[h]
  \centering
  \includegraphics*[scale=0.8]{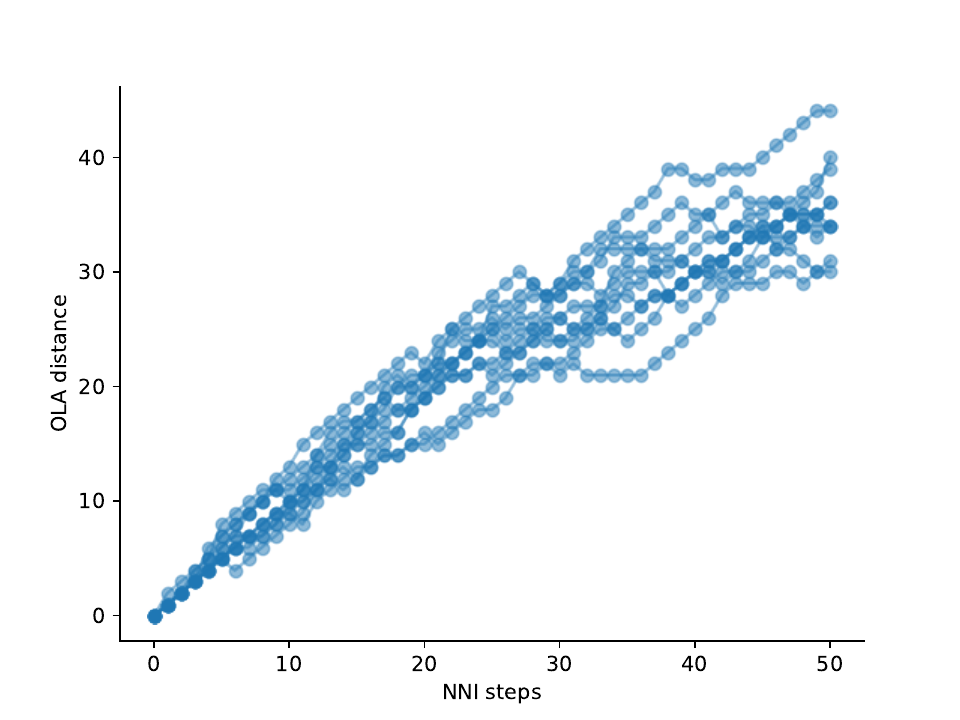}
  \caption{Comparing OLA distance for random NNI walks on trees with $100$ leaves.
  Here each walk consists of choosing a random starting tree and applying a sequence of $50$ random NNI moves.
  The plot shows the OLA distance between the starting tree and each tree along the walk, for $10$ NNI walks.}
  \label{fig:nni-walk-dist-100}
\end{figure}

\section{OLA code for trees on three or four leaves}\label{sec:4-leaves-ex}

We demonstrate the OLA encoding by showing the OLA code for all rooted, bifurcating trees on tree and four leaves,
in Figures~\ref{fig:all-ola-3} and \ref{fig:all-ola-4}.

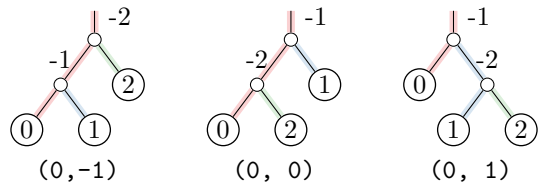
\begin{figure}[h]
  \centering
  \begin{tabular}{ccc}
    \begin{tikzpicture}[scale=0.6,
      level 1/.style={level distance=1cm},
    ]
      \node[tnode] (root) {} [grow=down,]
      child {node[inner,label=above right:{-$2$}] (in2) {}
        child {node[inner,label={-$1$}] (in1) {}
          child {node[leaf] (0) {$0$}}
          child {node[leaf] (1) {$1$}}
        }
        child {node[leaf] (2) {$2$}}
      };
      \draw[highlightedgecolor={color1!20}] (0) -- (in1) -- (in2) -- (root);
      \draw[highlightedgecolor={color2!20}] (1) -- (in1);
      \draw[highlightedgecolor={color3!20}] (2) -- (in2);
      \node[fit=(current bounding box),
      inner ysep=0mm,inner xsep=2mm] {};
    \end{tikzpicture}
    &
    \begin{tikzpicture}[scale=0.6,
      level 1/.style={level distance=1cm},
    ]
      \node[tnode] {} [grow=down]
      child {node[inner,label=above right:{-1}] (in1) {}
        child {node[inner,label={-2}] (in2) {}
          child {node[leaf] (0) {0}}
          child {node[leaf] (2) {2}}
        }
        child{node[leaf] (1) {1}}
      };
      \draw[highlightedgecolor={color1!20}] (0) -- (in2) -- (in1) -- (root);
      \draw[highlightedgecolor={color2!20}] (1) -- (in1);
      \draw[highlightedgecolor={color3!20}] (2) -- (in2);
      \node[fit=(current bounding box),
      inner ysep=0mm,inner xsep=2mm] {};
    \end{tikzpicture}
    &
    \begin{tikzpicture}[scale=0.6,
      level 1/.style={level distance=1cm},
    ]
      \node[tnode] {} [grow=down]
        child {node[inner,label=above right:{-1}] (in1) {}
          child {node[leaf] (0) {0}}
          child {node[inner,label={-$2$}] (in2) {}
            child {node[leaf] (1) {$1$}}
            child {node[leaf] (2) {$2$}}
          }
      };
      \draw[highlightedgecolor={color1!20}] (0) -- (in1) -- (root);
      \draw[highlightedgecolor={color2!20}] (1) -- (in2) -- (in1);
      \draw[highlightedgecolor={color3!20}] (2) -- (in2);
      \node[fit=(current bounding box),
      inner ysep=0mm,inner xsep=2mm] {};
    \end{tikzpicture}
    \\
    {\tt (0,-1)}
    &
    {\tt (0, 0)}
    &
    {\tt (0, 1)}
  \end{tabular}
  \caption{Phylogenetic trees with three leaves.}
  \label{fig:all-ola-3}
\end{figure}

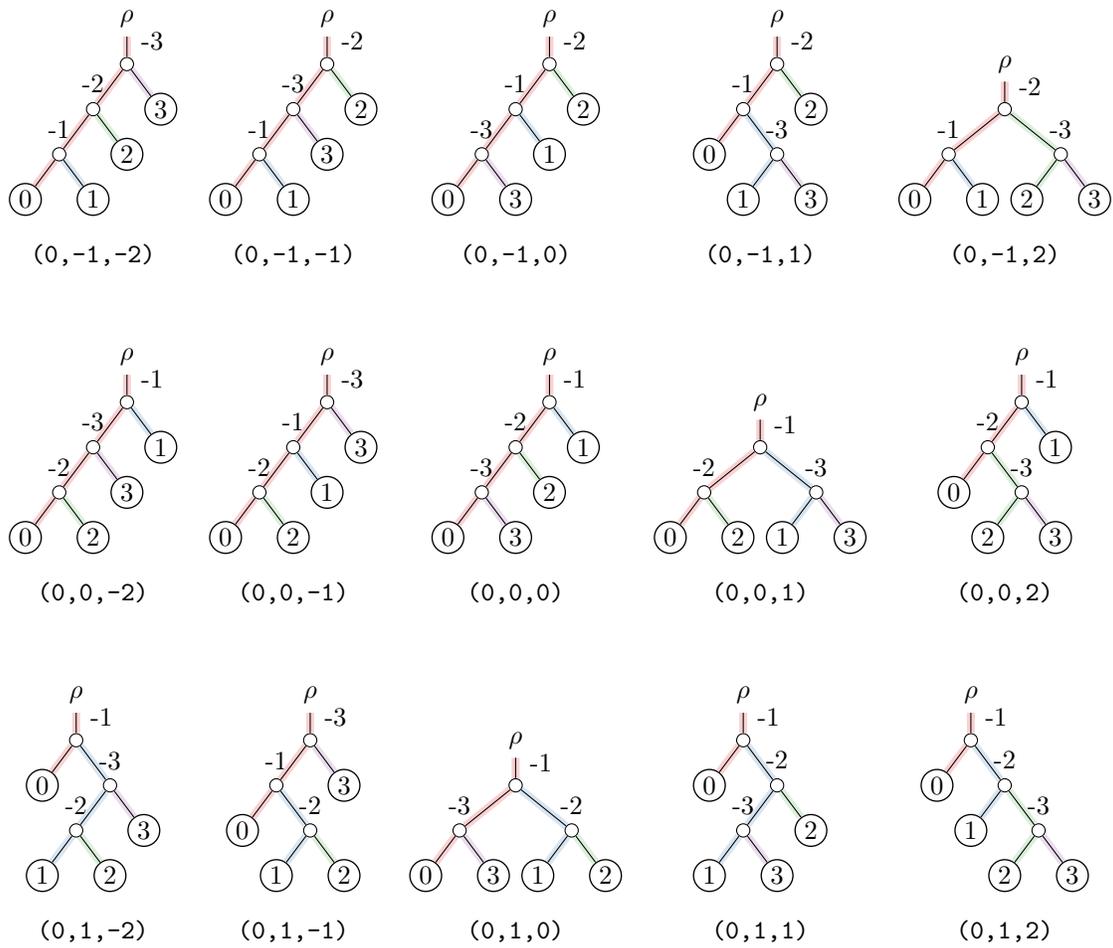
\begin{figure}[h]
  \centering
  \makebox[\textwidth][c]{
  \begin{tabular}{ccccc}
    \hline 
    \begin{tikzpicture}[scale=0.6,
      level 1/.style={level distance=1cm},
    ]
    \node (root) {$\rho$} [grow=down,]
      child {node[inner,label=above right:{-3}] (in3) {}
        child {node[inner,label={-2}] (in2) {}
          child {node[inner,label={-1}] (in1) {}
            child {node[leaf] (0) {0}}
            child {node[leaf] (1) {1}}
          }
          child {node[leaf] (2) {2}}
        }
        child {node[leaf] (3) {3}}
      };
      \draw[highlightedgecolor={color1!20}] (0) -- (in1) -- (in2) -- (in3) -- (root);
      \draw[highlightedgecolor={color2!20}] (1) -- (in1);
      \draw[highlightedgecolor={color3!20}] (2) -- (in2);
      \draw[highlightedgecolor={color4!20}] (3) -- (in3);
      \node[fit=(current bounding box),inner ysep=0.2cm,inner xsep=0] {};
    \end{tikzpicture}
    &
    \begin{tikzpicture}[scale=0.6,
      level 1/.style={level distance=1cm},
    ]
    \node (root) {$\rho$} [grow=down,]
      child {node[inner,label=above right:{-2}] (in2) {}
        child {node[inner,label={-3}] (in3) {}
          child {node[inner,label={-1}] (in1) {}
            child {node[leaf] (0) {0}}
            child {node[leaf] (1) {1}}
          }
          child {node[leaf] (3) {3}}
        }
        child {node[leaf] (2) {2}}
      };
      \draw[highlightedgecolor={color1!20}] (0) -- (in1) -- (in3) -- (in2) -- (root);
      \draw[highlightedgecolor={color2!20}] (1) -- (in1);
      \draw[highlightedgecolor={color3!20}] (2) -- (in2);
      \draw[highlightedgecolor={color4!20}] (3) -- (in3);
      \node[fit=(current bounding box),inner ysep=0.2cm,inner xsep=0] {};
    \end{tikzpicture}
    &
    \begin{tikzpicture}[scale=0.6,
      level 1/.style={level distance=1cm},
    ]
    \node (root) {$\rho$} [grow=down,]
      child {node[inner,label=above right:{-2}] (in2) {}
        child {node[inner,label={-1}] (in1) {}
          child {node[inner,label={-3}] (in3) {}
            child {node[leaf] (0) {0}}
            child {node[leaf] (3) {3}}
          }
          child {node[leaf] (1) {1}}
        }
        child {node[leaf] (2) {2}}
      };
      \draw[highlightedgecolor={color1!20}] (0) -- (in3) -- (in1) -- (in2) -- (root);
      \draw[highlightedgecolor={color2!20}] (1) -- (in1);
      \draw[highlightedgecolor={color3!20}] (2) -- (in2);
      \draw[highlightedgecolor={color4!20}] (3) -- (in3);
      \node[fit=(current bounding box),inner ysep=0.2cm,inner xsep=0] {};
    \end{tikzpicture}
    &
    \begin{tikzpicture}[scale=0.6,
      level 1/.style={level distance=1cm},
    ]
    \node (root) {$\rho$} [grow=down,]
      child {node[inner,label=above right:{-2}] (in2) {}
        child {node[inner,label={-1}] (in1) {}
          child {node[leaf] (0) {0}}
          child {node[inner,label={-3}] (in3) {}
            child {node[leaf] (1) {1}}
            child {node[leaf] (3) {3}}
          }
        }
        child {node[leaf] (2) {2}}
      };
      \draw[highlightedgecolor={color1!20}] (0) -- (in1) -- (in2) -- (root);
      \draw[highlightedgecolor={color2!20}] (1) -- (in3) -- (in1);
      \draw[highlightedgecolor={color3!20}] (2) -- (in2);
      \draw[highlightedgecolor={color4!20}] (3) -- (in3);
      \node[fit=(current bounding box),inner ysep=0.2cm,inner xsep=0] {};
    \end{tikzpicture}
    &
    \begin{tikzpicture}[scale=0.6,
      level 1/.style={level distance=1cm},
    ]
    \node (root) {$\rho$} [grow=down,]
      child {node[inner,label=above right:{-2}] (in2) {}
        child {node[left=2mm,inner,label={-1}] (in1) {}
          child {node[leaf] (0) {0}}
          child {node[leaf] (1) {1}}
        }
        child {node[right=2mm,inner,label={-3}] (in3) {}
          child {node[leaf] (2) {2}}
          child {node[leaf] (3) {3}}
        }
      };
      \draw[highlightedgecolor={color1!20}] (0) -- (in1) -- (in2) -- (root);
      \draw[highlightedgecolor={color2!20}] (1) -- (in1);
      \draw[highlightedgecolor={color3!20}] (2) -- (in3) -- (in2);
      \draw[highlightedgecolor={color4!20}] (3) -- (in3);
      \node[fit=(current bounding box),inner ysep=0.2cm,inner xsep=0] {};
    \end{tikzpicture}
    \\
    {\tt (0,-1,-2)}
    &
    {\tt (0,-1,-1)}
    &
    {\tt (0,-1,0)}
    &
    {\tt (0,-1,1)}
    &
    {\tt (0,-1,2)}
    \\[2em]
    \begin{tikzpicture}[scale=0.6,
      level 1/.style={level distance=1cm},
    ]
    \node (root) {$\rho$} [grow=down]
      child {node[inner,label=above right:{-1}] (in1) {}
        child {node[inner,label={-3}] (in3) {}
          child {node[inner,label={-2}] (in2) {}
            child {node[leaf] (0) {0}}
            child {node[leaf] (2) {2}}
          }
          child {node[leaf] (3) {3}}
        }
        child{node[leaf] (1) {1}}
      };
      \draw[highlightedgecolor={color1!20}] (0) -- (in2) -- (in3) -- (in1) -- (root);
      \draw[highlightedgecolor={color2!20}] (1) -- (in1);
      \draw[highlightedgecolor={color3!20}] (2) -- (in2);
      \draw[highlightedgecolor={color4!20}] (3) -- (in3);
      \node[fit=(current bounding box),inner ysep=0.2cm,inner xsep=0] {};
    \end{tikzpicture}
    &
    \begin{tikzpicture}[scale=0.6,
      level 1/.style={level distance=1cm},
    ]
    \node (root) {$\rho$} [grow=down]
      child {node[inner,label=above right:{-3}] (in3) {}
        child {node[inner,label={-1}] (in1) {}
          child {node[inner,label={-2}] (in2) {}
            child {node[leaf] (0) {0}}
            child {node[leaf] (2) {2}}
          }
          child{node[leaf] (1) {1}}
        }
        child {node[leaf] (3) {3}}
      };
      \draw[highlightedgecolor={color1!20}] (0) -- (in2) -- (in1) -- (in3) -- (root);
      \draw[highlightedgecolor={color2!20}] (1) -- (in1);
      \draw[highlightedgecolor={color3!20}] (2) -- (in2);
      \draw[highlightedgecolor={color4!20}] (3) -- (in3);
      \node[fit=(current bounding box),inner ysep=0.2cm,inner xsep=0] {};
    \end{tikzpicture}
    &
    \begin{tikzpicture}[scale=0.6,
      level 1/.style={level distance=1cm},
    ]
    \node (root) {$\rho$} [grow=down]
      child {node[inner,label=above right:{-1}] (in1) {}
        child {node[inner,label={-2}] (in2) {}
          child {node[inner,label={-3}] (in3) {}
            child {node[leaf] (0) {0}}
            child {node[leaf] (3) {3}}
          }
          child {node[leaf] (2) {2}}
        }
        child{node[leaf] (1) {1}}
      };
      \draw[highlightedgecolor={color1!20}] (0) -- (in3) -- (in2) -- (in1) -- (root);
      \draw[highlightedgecolor={color2!20}] (1) -- (in1);
      \draw[highlightedgecolor={color3!20}] (2) -- (in2);
      \draw[highlightedgecolor={color4!20}] (3) -- (in3);
      \node[fit=(current bounding box),inner ysep=0.2cm,inner xsep=0] {};
    \end{tikzpicture}
    &
    \begin{tikzpicture}[scale=0.6,
      level 1/.style={level distance=1cm},
    ]
    \node (root) {$\rho$} [grow=down]
      child {node[inner,label=above right:{-1}] (in1) {}
        child {node[left=0.2cm,inner,label={-2}] (in2) {}
          child {node[leaf] (0) {0}}
          child {node[leaf] (2) {2}}
        }
        child {node[right=0.2cm,inner,label={-3}] (in3) {}
          child {node[leaf] (1) {1}}
          child {node[leaf] (3) {3}}
        }
      };
      \draw[highlightedgecolor={color1!20}] (0) -- (in2) -- (in1) -- (root);
      \draw[highlightedgecolor={color2!20}] (1) -- (in3) -- (in1);
      \draw[highlightedgecolor={color3!20}] (2) -- (in2);
      \draw[highlightedgecolor={color4!20}] (3) -- (in3);
      \node[fit=(current bounding box),inner ysep=0.2cm,inner xsep=0] {};
    \end{tikzpicture}
    &
    \begin{tikzpicture}[scale=0.6,
      level 1/.style={level distance=1cm},
    ]
    \node (root) {$\rho$} [grow=down]
      child {node[inner,label=above right:{-1}] (in1) {}
        child {node[inner,label={-2}] (in2) {}
          child {node[leaf] (0) {0}}
          child {node[inner,label={-3}] (in3) {}
            child {node[leaf] (2) {2}}
            child {node[leaf] (3) {3}}
          }
        }
        child{node[leaf] (1) {1}}
      };
      \draw[highlightedgecolor={color1!20}] (0) -- (in2) -- (in1) -- (root);
      \draw[highlightedgecolor={color2!20}] (1) -- (in1);
      \draw[highlightedgecolor={color3!20}] (2) -- (in3) -- (in2);
      \draw[highlightedgecolor={color4!20}] (3) -- (in3);
      \node[fit=(current bounding box),inner ysep=0.2cm,inner xsep=0] {};
    \end{tikzpicture}
    \\
    {\tt (0,0,-2)}
    &
    {\tt (0,0,-1)}
    &
    {\tt (0,0,0)}
    &
    {\tt (0,0,1)}
    &
    {\tt (0,0,2)}
    \\[2em]
    \begin{tikzpicture}[scale=0.6,
      level 1/.style={level distance=1cm},
    ]
      \node (root) {$\rho$} [grow=down]
      child {node[inner,label=above right:{-1}] (in1) {}
        child {node[leaf] (0) {0}}
        child {node[inner,label={-3}] (in3) {}
          child {node[inner,label={-2}] (in2) {}
            child {node[leaf] (1) {1}}
            child {node[leaf] (2) {2}}
          }
          child {node[leaf] (3) {3}}
        }
      };
      \draw[highlightedgecolor={color1!20}] (0) -- (in1) -- (root);
      \draw[highlightedgecolor={color2!20}] (1) -- (in2) -- (in3) -- (in1);
      \draw[highlightedgecolor={color3!20}] (2) -- (in2);
      \draw[highlightedgecolor={color4!20}] (3) -- (in3);
      \node[fit=(current bounding box),inner ysep=0.2cm,inner xsep=0] {};
    \end{tikzpicture}
    &
    \begin{tikzpicture}[scale=0.6,
      level 1/.style={level distance=1cm},
    ]
      \node (root) {$\rho$} [grow=down]
      child {node[inner,label=above right:{-3}] (in3) {}
        child {node[inner,label={-1}] (in1) {}
          child {node[leaf] (0) {0}}
          child {node[inner,label={-2}] (in2) {}
            child {node[leaf] (1) {1}}
            child {node[leaf] (2) {2}}
          }
        }
        child {node[leaf] (3) {3}}
      };
      \draw[highlightedgecolor={color1!20}] (0) -- (in1) -- (in3) -- (root);
      \draw[highlightedgecolor={color2!20}] (1) -- (in2) -- (in1);
      \draw[highlightedgecolor={color3!20}] (2) -- (in2);
      \draw[highlightedgecolor={color4!20}] (3) -- (in3);
      \node[fit=(current bounding box),inner ysep=0.2cm,inner xsep=0] {};
    \end{tikzpicture}
    &
    \begin{tikzpicture}[scale=0.6,
      level 1/.style={level distance=1cm},
    ]
      \node (root) {$\rho$} [grow=down]
      child {node[inner,label=above right:{-1}] (in1) {}
        child {node[left=0.2cm,inner,label={-3}] (in3) {}
          child {node[leaf] (0) {0}}
          child {node[leaf] (3) {3}}
        }
        child {node[right=0.2cm,inner,label={-2}] (in2) {}
          child {node[leaf] (1) {1}}
          child {node[leaf] (2) {2}}
        }
      };
      \draw[highlightedgecolor={color1!20}] (0) -- (in3) -- (in1) -- (root);
      \draw[highlightedgecolor={color2!20}] (1) -- (in2) -- (in1);
      \draw[highlightedgecolor={color3!20}] (2) -- (in2);
      \draw[highlightedgecolor={color4!20}] (3) -- (in3);
      \node[fit=(current bounding box),inner ysep=0.2cm,inner xsep=0] {};
    \end{tikzpicture}
    &
    \begin{tikzpicture}[scale=0.6,
      level 1/.style={level distance=1cm},
    ]
      \node (root) {$\rho$} [grow=down]
      child {node[inner,label=above right:{-1}] (in1) {}
        child {node[leaf] (0) {0}}
        child {node[inner,label={-2}] (in2) {}
          child {node[inner,label={-3}] (in3) {}
            child {node[leaf] (1) {1}}
            child {node[leaf] (3) {3}}
          }
          child {node[leaf] (2) {2}}
        }
      };
      \draw[highlightedgecolor={color1!20}] (0) -- (in1) -- (root);
      \draw[highlightedgecolor={color2!20}] (1) -- (in3) -- (in2) -- (in1);
      \draw[highlightedgecolor={color3!20}] (2) -- (in2);
      \draw[highlightedgecolor={color4!20}] (3) -- (in3);
      \node[fit=(current bounding box),inner ysep=0.2cm,inner xsep=0] {};
    \end{tikzpicture}
    &
    \begin{tikzpicture}[scale=0.6,
      level 1/.style={level distance=1cm},
    ]
      \node (root) {$\rho$} [grow=down]
      child {node[inner,label=above right:{-1}] (in1) {}
        child {node[leaf] (0) {0}}
        child {node[inner,label={-2}] (in2) {}
          child {node[leaf] (1) {1}}
          child {node[inner,label={-3}] (in3) {}
            child {node[leaf] (2) {2}}
            child {node[leaf] (3) {3}}
          }
        }
      };
      \draw[highlightedgecolor={color1!20}] (0) -- (in1) -- (root);
      \draw[highlightedgecolor={color2!20}] (1) -- (in2) -- (in1);
      \draw[highlightedgecolor={color3!20}] (2) -- (in3) -- (in2);
      \draw[highlightedgecolor={color4!20}] (3) -- (in3);
      \node[fit=(current bounding box),inner ysep=0.2cm,inner xsep=0] {};
    \end{tikzpicture}
    \\
    {\tt (0,1,-2)}
    &
    {\tt (0,1,-1)}
    &
    {\tt (0,1,0)}
    &
    {\tt (0,1,1)}
    &
    {\tt (0,1,2)}
  \end{tabular}
  }
  \caption{All phylogenetic trees with four leaves.}
  \label{fig:all-ola-4}
\end{figure}

\end{document}